\numberwithin{equation}{section}
\theoremstyle{definition} 
\newtheorem{theorem}{Theorem} 
\newtheorem{remark}{Remark}
\begin{document}

\title[Testing Against Tree Ordered Alternatives in One-way ANOVA]{Testing Against Tree Ordered Alternatives in One-way ANOVA}


\author[1]{\fnm{Subha} \sur{Halder}}\email{sb.halder12345@gmail.com}

\author[2]{\fnm{Anjana} \sur{Mondal}}\email{anjanam@iittp.ac.in}
\equalcont{These authors contributed equally to this work.}

\author*[1]{\fnm{Somesh} \sur{Kumar}}\email{smsh@maths.iitkgp.ac.in}
\equalcont{These authors contributed equally to this work.}

\affil[1]{\orgdiv{Department of Mathematics}, \orgname{Indian Institute of Technology Kharagpur}, \orgaddress{\city{Kharagpur}, \postcode{721302}, \state{West Bengal}, \country{India}}}

\affil[2]{\orgdiv{Department of Mathematics and Statistics}, \orgname{Indian Institute of Technology Tirupati}, \orgaddress{\city{Tirupati}, \postcode{517619}, \state{Andhra Pradesh}, \country{India}}}

\affil[3*]{\orgdiv{Department of Mathematics}, \orgname{Indian Institute of Technology Kharagpur}, \orgaddress{\city{Kharagpur}, \postcode{721302}, \state{West Bengal}, \country{India}}}

\abstract{
\{The likelihood ratio test against a tree ordered alternative in one-way heteroscedastic ANOVA is considered for the first time. Bootstrap is used to implement this and two multiple comparisons based tests and shown to have very good size and power performance.\} \\
 In this paper, the problem of testing the homogeneity of mean effects against the tree ordered alternative is considered in the heteroscedastic one-way ANOVA model. The likelihood ratio test and two multiple comparison-based tests - named Max-D and Min-D are proposed and implemented using the parametric bootstrap method. An extensive simulation study shows that these tests effectively control type-I error rates for various choices of sample sizes and error variances. Further, the likelihood ratio and Max-D tests achieve very good powers in all cases. The test Min-D is seen to perform better than the other two for some specific configurations of parameters.  The robustness of these tests is investigated by implementing some non-normal distributions, viz., skew-normal, Laplace, exponential, mixture-normal, and t distributions. `R' packages are developed and shared on ``Github" for the ease of users.  The proposed tests are illustrated on a dataset of patients undergoing psychological treatments.}

\keywords{Heteroscedasticity, Likelihood ratio test, One-way ANOVA, Parametric bootstrap, Tree ordered alternative}

\pacs[MSC Classification]{62F30, 62F35, 62F40}

\maketitle

\section{Introduction}\label{Introduction}
The problem of determining whether the average effects of new treatments are better than that of the control group occurs frequently in various experiments related to clinical trials, dose-response studies, design of industrial equipment, etc.  
In standard clinical trials, several new treatment regimes are administered to different groups of patients, and the efficacy of these regimes is to be compared with a control group that received a placebo. Formally, let $\mu_1, \mu_2,\ldots,\mu_k$ be the mean effects of the $k$ treatments, and let $\mu_0$ be the mean effect of the control group. Then, one is interested in testing the hypothesis

\begin{equation}
\begin{aligned}
     H_0 &: \mu_0 = \mu_1 = \cdots = \mu_k ~\text{against}\\
      H_1 &: \mu_0 \leq \mu_i \;\; \forall i = 1,2,\ldots, k \; (k \geq 3), \text{ with at least one strict inequality.}\label{eq:1.1}
\end{aligned}
\end{equation}

\noindent The data for this problem can be represented using a one-way ANOVA fixed effects model  
\begin{equation}
    X_{ij} = \mu_i + \epsilon_{ij} , j = 1,2,\ldots,n_i; i = 0,1,\ldots,k.
\end{equation}
\noindent Here $X_{ij}$ represents the response corresponding to the $j$-th replication of the $i$-th treatment, $\mu_i$ represents the mean effect of the $i$-th treatment and the errors $\epsilon_{ij}$'s are independent and normally distributed with
$\epsilon_{ij} \sim N(0, \sigma_i^2)$ for $j = 1,2,\ldots,n_i; i = 0,1,\ldots,k$.

\indent Researchers may have prior knowledge about the specified ordered structure for the mean effects of various treatments, e.g., tree order, umbrella order, simple loop structure, etc. Utilizing the information on order restrictions on parameters leads to more powerful tests. One may see \cite{brunk1972statistical} and \cite{dykstra2012advances} for detailed studies on inference procedures in ANOVA models under various order constraints. A recent application of tree ordering in drug testing is discussed in \cite{peddada2006tests}. In this study, the haematopoietic data from the National Toxicology Program's (NTP) prechronic investigation of anthraquinone in male rats is examined. 
From the comprehensive blood count data, mean corpuscular volume (MCV) was selected as a response variable. The impact of six dosages $(0, 1875, 3750, 7500, 15,000, \;\text{and}\; 30,000 \; ppm)$ of a particular chemical group called anthraquinone was measured in male rats, and the data was taken for day 22. It was observed that the mean treatment effects satisfy a tree order. Another application of tree order for treatment effects is discussed in Sect. \ref{Application}. Here, we have considered the efficacy of various psychological treatments on the noise sensitivity of individuals who suffer from headaches.

For a one-way ANOVA model, \cite{bartholomew1961test} developed the likelihood ratio test (LRT) for testing against tree ordered alternatives when group variances are known. \cite{robertson1985one} considered this problem when variances are unknown but equal. They computed the critical points of the LRT for selected values of $k$ and sample sizes.  An approximation to the quantiles of this test was derived by \cite{futschik1998likelihood} for a large number of treatments.

The admissibility properties of LRT and other multiple contrast tests for tree ordered alternatives have been studied in \cite{cohen1992improved}, \cite{cohen2000properties} and \cite{cohen2002inference} when the group variances are equal. A new test statistic based on the restricted maximum likelihood estimators (MLEs) of the means was proposed by \cite{marcus1992further}. For some specific choices of sample sizes and number of groups, they gave numerical schemes to compute the null distribution and simulate the power values of this test. \cite{singh1993power} have numerically computed the power values of the LRT when the number of groups is 3 or 4, and error variances are common but may be known or unknown.  

Stochastic domination properties of restricted MLEs of the mean effects and confidence intervals based on these are investigated in \cite{hwang1994confidence}. A test based on the differences of the restricted MLEs of means was proposed by \cite{peddada2006tests} for tree ordered alternatives. \cite{betcher2009statistical} developed similar test procedures based on a modification of the restricted MLEs. This is due to the fact that the performance of the restricted MLE for the mean of the control group drastically drops as the number of treatment groups increases (\cite{lee1988quadratic}).

\cite{tian2003likelihood} considered the problem of testing the homogeneity of several inverse Gaussian means against alternatives of monotone ordering and tree ordering. They developed the LRT for the two cases when the common scale parameter is known or unknown. \cite{chaudhuri2007consistent} derived a consistent estimator of the mean of the control when the sample size for the control group is large. Two new estimators for this were also given in \cite{momeni2021estimation}, and they are shown to have comparable good risk performance. 

In all of the works described above, the error variances are assumed to be equal for all the groups, known or unknown. However, in various industrial or medical studies, it is not possible to have common variance due to varying experimental conditions. It has been noted that for such situations, the classical F-test in ANOVA becomes either too conservative or liberal; see \cite{bishop1978exact}, \cite{chen2000one}, and \cite{pauly2015asymptotic}.

\cite{chen2001one} developed one-stage and two-stage sampling procedures for testing the equality of means in heteroscedastic one-way ANOVA. Also, they proposed a new range test and ANOVA test based on  Student’s t distribution for one-stage and two-stage sampling procedures and calculated critical values. \cite{chen2000one} developed similar one-stage range test procedures for testing equality means against simple ordered means. \cite{chen2004one} modified the test given by \cite{chen2000one} for simple ordered alternative and compares their results. \cite{wen2022single} used the stage sampling procedure to obtain a one-sided and two-sided confidence interval of $\mu_i-\mu_0$ $\forall i=1,2,\ldots,k$.

\cite{tamhane2004finding} considered the problem of identifying the maximum safe dose level of a drug when group variances are heterogeneous. They proposed three test procedures and developed approximation methods to find the critical points. 

\cite{hasler2008multiple} compared four different tests for testing multiple contrasts when errors are normally distributed and have heteroscedastic variances. A simulation study was performed to check whether these tests control type I error rates. It was noted that depending on choices of group variances, the tests may be liberal or conservative. \cite{hasler2016heteroscedasticity} further compared one of the tests studied in \cite{hasler2008multiple} with another one based on the sandwich estimation method of \cite{herberich2010robust} for multiple contrasts with respect to controlling family-wise error rates and found the former method to be more robust. 

\cite{li2012multiple} considered the multiple comparisons with the control problem and proposed three techniques to obtain approximate confidence intervals, using Bonferroni approximation, Slepian’s inequality, and multivariate t distribution. Also, they simulated the size for this test and compared it with other alternative tests. For the same problem, an alternative test based on multiplicity adjustment to the Dunnett procedure was proposed by \cite{tamhane2023multiplicity}. They have proposed an algorithm to estimate the size of this test and compared it to similar existing tests. \cite{alver2023multiple} proposed a parametric bootstrap (PB) test for testing the hypothesis that at least one treatment effect differs from the control. 

Note that in the studies described above, the main focus is on proposing multiple comparison-based tests that use asymptotic or approximate approaches to find critical points under heteroscedastic variances. However, the null and alternative hypotheses are formulated differently from those in our study. 
Here, we test the hypothesis of homogeneity of treatment effects against the alternative of treatments better than the control with at least one strict inequality. That is testing against the alternative of at least one treatment effect being superior to the control, thereby making it a constrained testing problem.

In this paper, we consider testing for at least one treatment better than control in a one-way ANOVA model when group variances are heterogeneous. The likelihood ratio test and simultaneous comparison-based tests are developed, and parametric bootstrap (PB) is used to evaluate critical points. An extensive simulation study shows that the proposed tests behave as exact tests even for small samples as well as for highly heterogeneous groups. Power comparisons capture three distinct situations, each illustrating a particular configuration where one test is seen to outperform the other two. Recently, \cite{alver2023multiple} have proposed a parametric bootstrap  test based on multiple comparisons for testing the homogeneity of effects against the alternative of at least one effect being different than the control. A comparison of our proposed tests with the test of \cite{alver2023multiple} (called PB-AZ henceforth) is also carried out. Our tests (mainly LRT and Max-D) are seen to exhibit superior performance in controlling type-I error rates for small sample size cases and have higher powers over PB-AZ for almost all cases of parameter configurations.


The paper is arranged in the following manner. Sect. \ref{LRT} describes the likelihood ratio test procedure. The algorithm for finding the MLEs of the parameters for the control problem is developed, and the procedure to determine the critical values of the LRT, as well as proof of the accuracy of the PB method, is given in this section. In Sect. \ref{Bootstrap Max-D Min-D}, two tests based on pairwise differences of group means are proposed, and the PB approach is used to determine their critical values. The asymptotic accuracy of these tests is established in Sect. \ref{Section 3.1}. An extensive simulation study is conducted in Sect. \ref{Size and Power} for size and power comparisons of our proposed tests and PB-AZ test. Details of the `R' package developed for the practical application of the tests are also given here.  Further, the robustness of these tests is also investigated under deviation from normality in Sect. \ref{Robustness}. Finally, in Sect. \ref{Application}, the test procedures are implemented on a data set of patients undergoing different psychological treatments. 

\section{Likelihood Ratio Test}\label{LRT}
In this section, we develop the likelihood ratio test for testing $H_0$ against $H_1$ as defined in (\ref{eq:1.1}). Note that here, we have assumed that the variances are unknown and heterogeneous. 
It is required to maximize the likelihood function under the null and full parameter spaces. These are described as 
$\Omega_{H_0} = \{(\mu_0,\mu_1,\ldots,\mu_k),\;(\sigma_0^2,\sigma_1^2,\ldots,\sigma_k^2):\;\sigma_i^2>0,\;\mu_i=\mu,\;\forall\; i=0,1,\ldots,k\}$ and $\Omega_{T} = \{(\mu_0,\mu_1,\ldots,\mu_k),\;(\sigma_0^2,\sigma_1^2,\ldots,\sigma_k^2):\;-\infty<\mu_0\leq \mu_i<\infty,\;i=1,2,\ldots,k,\;\sigma_j^2>0,\;j=0,1,\ldots,k\}$ respectively. The likelihood ratio is then defined as $\lambda_l=\frac{\hat{L}\left(\Omega_{H_0}\right)}{\hat{L}(\Omega_T)}$, where $\hat{L}\left(\Omega_{H_0}\right)$ and $\hat{L}(\Omega_T)$ represents the maximum values of the likelihood function ($L$) under $\Omega_{H_0}$ and $\Omega_T$ respectively. 

It is noted that \cite{mondal2023testing} have given an algorithm to determine the MLEs of  ${\mu}$ and ${\sigma}_{i}^{2}$ under the null hypothesis parameter space $\Omega_{H_0}$. Using this algorithm, the MLEs of ${\mu}$ and ${\sigma}_{i}^{2}, i=0, 1, \ldots, k$ are obtained as $\hat{\mu}_{(0)}$ and $\;\hat{\sigma}_{i (0)}^{2}, i=0, 1, \ldots, k$ respectively. Next, we propose a method for evaluating the MLEs of the parameters under $\Omega_{T}$. 

Denoting the observed values of random variables $X_{ij}$ by $x_{ij}, j=1, \ldots, n_i; i=0, 1, \ldots, k$, the likelihood function under $\Omega_{T}$ is given by
$$
{L}\left(\Omega_{T}\right)=\frac{1}{(2 \pi)^{N / 2}} \frac{1}{\prod_{i=0}^{k}\left({\sigma}_{i}^2\right)^{n_{i} / 2}} \exp \left(-\frac{1}{2} \sum_{i=0}^{k} \sum_{j=1}^{n_{i}} \frac{\left(x_{i j}-{\mu_i}\right)^{2}}{{\sigma}_{i}^{2}}\right).
$$

The log-likelihood function can be written as 
\begin{equation}
    l\left({\underline {\boldsymbol{\mu}}}, \boldsymbol {\underline \sigma^{2}}\right)=-\sum_{i=0}^{k}\frac{n_{i}}{2} \ln \sigma_{i}^{2}-\sum_{i=0}^{k}\frac{1}{2 \sigma_{i}^{2}} \sum_{j=1}^{n_{i}}\left(x_{i j}-\mu_{i}\right)^{2}+\delta ,\label{2.1}
\end{equation} 
where $\boldsymbol{\underline{\mu}}= (\mu_0,\mu_1,\ldots,\mu_k)^T$, $\boldsymbol{\underline{\sigma}}^2= (\sigma_0^2,\sigma_1^2,\ldots,\sigma_k^2)^T$ and $\delta$ is a constant (independent of parameters). 

Let $\overline{x}_{i}=\frac{1}{n_i}\sum\limits_{j=1}^{n_i}x_{ij}, ~i=0, 1\ldots,k$. The MLEs are solutions of the likelihood equations 

\begin{equation}
    {\mu}_i= \overline{x}_{i}\;\text{and}\;
{\sigma}_i^{2}=\frac{1}{n_i}\sum\limits_{j=1}^{n_i}\left(x_{ij}-{{\mu}}_{i}\right)^2, ~~i=0, 1, \ldots, k \label{2.2}
\end{equation}
subject to the condition that $\left({\underline {\boldsymbol{\mu}}}, \boldsymbol {\underline \sigma^{2}}\right) \in \Omega_T$. 

Note that the likelihood equations (\ref{2.2}) for $(2k+2)$ parameters are nonlinear in nature, and one cannot solve these analytically under constraints $\Omega_T$. A nonlinear Gauss-Seidel type iterative method is proposed here to solve this optimization problem. The method uses the fact that for fixed $\sigma_i^2$s, the maximization of $l\left({\underline {\boldsymbol{\mu}}}, \boldsymbol {\underline \sigma^{2}}\right)$ in equation (\ref{2.1}) is equivalent to minimizing \begin{equation}
    \sum_{i=0}^{k}\left(\overline{x}_{i}-\mu_{i}\right)^{2} w_{i} \label{2.3}
\end{equation}
with respect to $\underline{\boldsymbol{\mu}} \in I$, where $I = \{\underline{\boldsymbol{\mu}}:\mu_0\leq \mu_i, i=1,2,\ldots k\}$, and $w_{i}=n_{i} / \sigma_{i}^{2}$ for $i=0,1,\ldots ,k$. A solution to this can be obtained using 
the Minimum Violator Algorithm (see Chapter 2, \cite{brunk1972statistical}). We use these solutions to compute the next iteration for $\sigma_i^2$s. The detailed algorithm is described below.

Define $\overline{\underline{\bf x}}=(\overline{x}_{0},\overline{x}_{1},\ldots,\overline{x}_{k})^T, {\underline {\bf w}}=(w_0,w_1,\ldots, w_k)^T$ and $s_i^2=\frac{1}{n_i}\sum_{j=1}^{n_i} {(x_{ij}-\bar{x}_i)^2}; i=0, 1, \ldots, k$. Further, let $\hat{\mu}_{i(T)}$ and $\;\hat{\sigma}_{i(T)}^{2}$ be the MLEs of ${\mu_i}$ and ${\sigma}_{i}^{2}$ under $\Omega_T$.


\begin{algorithm}[!ht]
    \caption{Scheme for calculating the MLE under tree order restrictions on mean effects}\label{Algorithm 1}
{Step 1:} Consider the initial values as $\underline{\boldsymbol{\mu}}^{(0)}=\underline{\overline{\mathbf{x}}}$ and ${\sigma}_i^{2(0)}=\frac{1}{n_i}\sum\limits_{j=1}^{n_i}\left(x_{ij}-{{\mu}}^{(0)}_{i}\right)^2$, $i=0,1,\ldots,k$, where $\underline{\boldsymbol{\mu}}^{(0)}=\left(\mu_0^{(0)},\mu_1^{(0)},\ldots,\mu_k^{(0)}\right)^T$.

{Step 2:} Use the Minimum Violator Algorithm to compute the isotonic regression $\underline{\boldsymbol{\mu}}^{(m)}$ of $\left(\underline{\overline{\mathbf{x}}}, \mathbf{\underline{w}}^{(m-1)}\right)$ under tree order restriction, where  $\underline{\boldsymbol{\mu}}^{(m)}=\left(\mu_0^{(m)},\mu_1^{(m)},\ldots,\mu_k^{(m)}\right)^T$, $\mathbf{\underline{w}}^{(m-1)}=\left({w}_0^{(m-1)},{w}_1^{(m-1)},\ldots,{w}_k^{(m-1)}\right)^T$, ${w}_i^{(m-1)}=\frac{n_i}{{\sigma}_i^{2(m-1)}}$; $i=0,1,\ldots ,k$.\\

{Step 3:} Now calculate ${\sigma}_i^{2(m)}=\frac{1}{n_i}\sum\limits_{j=1}^{n_i}\left(x_{ij}-{{\mu}}^{(m)}_{i}\right)^2,$ and obtain \\$\underline{\boldsymbol{\sigma}}^{2(m)}=\left({\sigma}_0^{2(m)},{\sigma}_1^{2(m)},\ldots, {\sigma}_k^{2(m)}\right)^T.$\\
{Step 4:} Terminate the algorithm in Step $3$ if 
$$
\max _{0 \leq i \leq k}\left|\mu_{i}^{(m-1)}-\mu_{i}^{(m)}\right| \leq 10^{-p}
$$
and $$\max _{0 \leq i \leq k}\left|\sigma_{i}^{2(m-1)}-\sigma_{i}^{2(m)}\right| \leq 10^{-p}
\;\;\text{for some $p\geq 3.$}$$
\end{algorithm}

Note that Algorithm \ref{Algorithm 1} is developed based on the fact that   
$l\left(\underline{\boldsymbol{\mu}}^{(m)},\underline{\boldsymbol{\sigma}}^{2(m-1)}\right)$ maximizes $l\left(\underline{\boldsymbol{\mu}}, \underline{\boldsymbol{\sigma}}^{2(m-1)}\right)$ over $\underline{\boldsymbol{\mu}} \in I$. Therefore, we have
\begin{equation}
l\left(\underline{\boldsymbol{\mu}}^{(m)}, \underline{\boldsymbol{\sigma}}^{2(m-1)}\right) \leq l\left(\underline{\boldsymbol{\mu}}^{(m)}, \underline{\boldsymbol{\sigma}}^{2(m)}\right) \leq l\left(\underline{\boldsymbol{\mu}}^{(m+1)}, \underline{\boldsymbol{\sigma}}^{2(m)}\right).\label{2.4}
\end{equation}


The existence and uniqueness of the MLE for $\left({\underline {\boldsymbol{\mu}}}, \boldsymbol {\underline \sigma^{2}}\right)$ subject to tree order restrictions on $\mu_i$'s can be established using arguments similar to those in \cite{shi1998maximum}. To prove the uniqueness, the following condition is required\\
\textbf{Condition 1}: ${s}_{i}^{2}> \max \left\{\left(\overline{x}_{i}-a\right)^{2},\left(\overline{x}_{i}-b\right)^{2}\right\},  \;i=0, 1, \ldots, k,$ 
where $a=\min\limits_{0\leq i \leq k} \overline{x}_i,\;\; b=\max\limits_{0\leq i \leq k} \overline{x}_i$.\\
Finally, the convergence of Algorithm \ref{Algorithm 1} is shown in Theorem \ref{Theorem 1}. The proof is given in the Appendix.

\begin{theorem}
\label{Theorem 1}
Under {\bf Condition 1}, the sequence of iterative solutions in Algorithm \ref{Algorithm 1} converges to the true MLE of $(\boldsymbol{\underline\mu},\boldsymbol{\underline\sigma}^2)$. 
\end{theorem}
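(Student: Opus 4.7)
The plan is to follow the standard template for convergence of cyclic (nonlinear Gauss--Seidel) maximization, specialized as in Shi (1998). Before analyzing iterates I would first use \textbf{Condition 1} to rule out degeneracy: it forces $s_i^2$ to exceed $(\overline{x}_i-\mu_i)^2$ at the extremal admissible values $\mu_i\in\{a,b\}$, so the profile log-likelihood (after optimizing out $\sigma_i^2=s_i^2+(\overline{x}_i-\mu_i)^2$ for each fixed $\underline{\boldsymbol{\mu}}$) remains bounded above on the tree-order cone and attains its supremum on a compact subset of $\Omega_T$ bounded away from $\sigma_i^2=0$. This yields both existence and (by the Shi (1998) argument) uniqueness of the MLE, and gives me a compact set in which the iterates eventually live.

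Next I would exploit the key monotonicity chain already displayed in \eqref{2.4}. By construction, Step~2 maximizes $l(\underline{\boldsymbol{\mu}},\underline{\boldsymbol{\sigma}}^{2(m-1)})$ in $\underline{\boldsymbol{\mu}}$ over the cone $I$ via the Minimum Violator Algorithm, and Step~3 maximizes $l(\underline{\boldsymbol{\mu}}^{(m)},\underline{\boldsymbol{\sigma}}^{2})$ in $\underline{\boldsymbol{\sigma}}^{2}>0$ in closed form. Therefore the sequence $\{l(\underline{\boldsymbol{\mu}}^{(m)},\underline{\boldsymbol{\sigma}}^{2(m)})\}$ is non-decreasing, and the upper bound from the previous paragraph ensures it converges to some finite $l^{*}\le l(\hat{\underline{\boldsymbol{\mu}}}_{(T)},\hat{\underline{\boldsymbol{\sigma}}}^{2}_{(T)})$.

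Compactness then lets me pick a convergent subsequence $(\underline{\boldsymbol{\mu}}^{(m_j)},\underline{\boldsymbol{\sigma}}^{2(m_j)})\to(\underline{\boldsymbol{\mu}}^{*},\underline{\boldsymbol{\sigma}}^{2*})$ with $\underline{\boldsymbol{\mu}}^{*}\in I$ and each $\sigma_i^{2*}>0$. I would argue the limit is a fixed point of one sweep of the algorithm: the Minimum Violator map, viewed as a function of the weights $\underline{\mathbf{w}}$ on the compact weight set induced by $\underline{\boldsymbol{\sigma}}^{2(m)}$, is continuous wherever the active-set pattern is locally constant, and the closed-form $\sigma_i^2$-update is manifestly continuous. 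Passing to the limit in Steps~2--3 along the subsequence and using continuity of $l$ then gives $l(\underline{\boldsymbol{\mu}}^{*},\underline{\boldsymbol{\sigma}}^{2*})=l^{*}$ together with the fixed-point identities $\mu_i^{*}$ = isotonic regression of $(\overline{\underline{\mathbf x}},\underline{\mathbf w}^{*})$ under tree order and $\sigma_i^{2*}=\tfrac{1}{n_i}\sum_j(x_{ij}-\mu_i^{*})^2$. These are exactly the KKT conditions for maximizing $l$ over $\Omega_T$.

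The main obstacle I anticipate is the last step, handling the nonsmoothness of the isotonic regression map in $\underline{\mathbf{w}}$ at weight configurations with tied level sets, since the Minimum Violator Algorithm can switch active sets across the limit. I would circumvent this by showing that any accumulation point is a stationary point in the generalized sense: along a further subsequence the active set (the pattern of pooled nodes in the tree) is eventually constant, after which the isotonic map is a smooth (in fact linear in $\underline{\overline{\mathbf x}}$ and rational in $\underline{\mathbf w}$) function, and the fixed-point identities above hold exactly. Since the MLE is unique under Condition 1, every accumulation point coincides with $(\hat{\underline{\boldsymbol{\mu}}}_{(T)},\hat{\underline{\boldsymbol{\sigma}}}^{2}_{(T)})$; the whole sequence, living in a compact set with a unique accumulation point, therefore converges to the MLE, which is the desired conclusion.
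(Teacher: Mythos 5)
Your skeleton matches the paper's: monotone ascent of the likelihood via \eqref{2.4}, uniform boundedness of the iterates, extraction of a convergent subsequence, identification of the subsequential limit with the (unique, by Condition 1) MLE, and the standard conclusion that a bounded sequence with a unique accumulation point converges. Where you diverge is in how the limit is identified as the MLE. The paper never argues continuity of the Minimum Violator map: instead it passes to the limit in the \emph{variational characterization} of weighted isotonic regression, i.e.\ the bilinear conditions $\langle \underline{\overline{\bf x}}-\underline{\boldsymbol{\mu}}^{(m_q)},\underline{\boldsymbol{\mu}}^{(m_q)}\rangle_{\underline{\bf w}^{(m_q-1)}}=0$ and $\langle \underline{\overline{\bf x}}-\underline{\boldsymbol{\mu}}^{(m_q)},\underline{\boldsymbol{\mu}}\rangle_{\underline{\bf w}^{(m_q-1)}}\le 0$ (Theorem 7.8 of Barlow et al.), which are jointly continuous in the iterate and the weights, so no active-set bookkeeping is needed. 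Your fixed-point route is workable, but your anticipated obstacle is not really one: the projection onto the fixed tree-order cone under a $\underline{\bf w}$-weighted inner product is globally continuous in $\underline{\bf w}$ on compact sets of positive weights, so the ``eventually constant pooling pattern'' device is unnecessary; the variational-inequality route sidesteps it entirely. The one genuine omission in your write-up is the step both arguments need and the paper makes explicit via Lemma A.1 of \cite{shi1998maximum}: consecutive iterates must satisfy $\underline{\boldsymbol{\mu}}^{(m_q-1)}-\underline{\boldsymbol{\mu}}^{(m_q)}\to\underline{\boldsymbol 0}$ and $\underline{\boldsymbol{\sigma}}^{2(m_q-1)}-\underline{\boldsymbol{\sigma}}^{2(m_q)}\to\underline{\boldsymbol 0}$. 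Without this you cannot ``pass to the limit in Steps 2--3,'' because $\underline{\boldsymbol{\mu}}^{(m_q)}$ is the isotonic regression with weights built from $\underline{\boldsymbol{\sigma}}^{2(m_q-1)}$, and convergence of the subsequence $(\underline{\boldsymbol{\mu}}^{(m_q)},\underline{\boldsymbol{\sigma}}^{2(m_q)})$ alone says nothing about $\underline{\boldsymbol{\sigma}}^{2(m_q-1)}$; this vanishing-difference property follows from the monotone convergence of the likelihood values together with the strictness of each block update, and should be stated rather than left implicit.
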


Now, we obtain the likelihood ratio as \begin{equation}
    \lambda_l=\prod_{i=0}^{k} \frac{\left(\hat{\sigma}_{i (T)}^{2}\right)^{n_{i} / 2}}{\left(\hat{\sigma}_{i (0)}^{2}\right)^{n_{i} / 2}}.\label{2.5}
\end{equation}
Then $H_0$ is rejected if $\lambda_l<c_{l}$, where $c_l$ is determined from the size condition $\sup\limits_{H_0}P{(\lambda_l<c_{l})}=\alpha$. Note that the expressions for $\lambda_l$ and its null distribution can not be obtained in closed forms, so we use the PB approach for computing the critical points. Note that the distribution of $\lambda_l$ does not depend on the location parameters under $H_0$. So, the bootstrap samples $X_{ij}^*, \;j= 1,2,\ldots,n_i;$ are generated from $N(0,{S}_{i}^2),$ ${S}_{i}^2=\frac{1}{n_i-1}\sum\limits_{j=1}^{n_i}\left(X_{ij}-\overline{X}_{i}\right)^2,$ $\overline{X}_{i}=\frac{1}{n_i}\sum\limits_{j=1}^{n_i} X_{ij}, i=0,1,\ldots,k.$ 

The estimated values of the critical points using the bootstrap and empirical size and power values of the LRT are now evaluated using the following algorithm. 
\begin{algorithm}[!ht]
    \caption{Method to evaluate critical points, empirical size, and power values of the LRT}
    \label{Algorithm 2}
{Step 1.} Choose configurations $\mu$, $(\sigma_0^2, \sigma_1^2, \ldots \sigma_k^2)$.
    
{Step 2.} Generate samples $X_{ij},$ $j=1,2,\ldots,n_i$  from $N(\mu, \sigma_i^2)$, $i=0,1,\ldots,k$.

{Step 3.} Compute the sample means $\overline{X}_i$, sample variances $S_i^2, i=0, 1, \ldots, k$. 

{Step 4.} Generate bootstrap samples $X_{ij}^*, j=1, \ldots, n_i$  from $N(0, {S}_{i}^2), i=0,1,\ldots,k.$

{Step 5.} Compute the bootstrap likelihood ratio $\lambda_l^*$ using the samples $X_{ij}^*$. 

{Step 6.} Repeat Steps 4 and 5 $M$ number of times (say $M=5000$), and arrange the corresponding bootstrap test statistic values in ascending order to determine the $\alpha$-quantile $c_{l}^*:=\lambda_{l(\lfloor{\alpha M}\rfloor)}^*$ as the bootstrap critical value.

{Step 7.} Evaluate the likelihood ratio $\lambda_l$ (defined in (\ref{2.5})) using $X_{ij}$ for $j=1,\ldots ,n_i$; $i=0,1,\ldots ,k$. 

{Step 8.} Steps 2 to 7 are repeated $P$ number of times (say, $P=5000$). Call the computed values of the test statistic in Step 7 as $\lambda_{l\gamma}$ and that of the bootstrap critical point in Step 6 as $c_{l\gamma}^*$ for $\gamma=1,2,\ldots, P.$

{Step 9.} The empirical size value of the LRT is then $$\alpha_{LRT} = \frac{\text{total count of}\; \lambda_{l\gamma} < c_{l\gamma}^*}{P}=\frac{1}{P}\sum\limits_{\gamma=1}^P {\bf 1}\{\lambda_{l\gamma} < c_{l\gamma}^*\}$$ 

{Step 10.} For evaluating the power values, change the configuration in Step 1 by $(\mu_0, \mu_1, \ldots, \mu_k)$ with $\mu_0\leq \mu_i, i=1, \ldots, k$ (with at least one strict inequality) and generate samples from $N(\mu_i, \sigma_i^2),$ $i=0,1,\ldots,k$.  
\end{algorithm}

To validate the bootstrap test, it is necessary to show that the conditional distribution of the bootstrap LRT, given the samples, converges to the null distribution of the LRT. The following theorem proves the asymptotic accuracy of the parametric bootstrap used here.
\begin{theorem}
\label{Theorem 2}
If $\min\limits_{0 \leq i \leq k} n_i \rightarrow \infty$ then 
\[
\sup_{x \in \mathbb{R}}|F_{\lambda_l^*|\underline{\mathbf{X}}}(x) - F_{\lambda_l|H_0}(x)| \stackrel{P}{\longrightarrow} 0,
\]
where $F_{\lambda_l^*|\underline{\mathbf{X}}}(x)$ represents the conditional distribution function of $\lambda_l^*$ given the observations, and $F_{\lambda_l|H_0}(x)$ represents the null distribution function of $\lambda_l$.
\end{theorem}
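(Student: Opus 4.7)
The plan is to show that the finite-sample null CDF $F_{\lambda_l|H_0}$ and the bootstrap conditional CDF $F_{\lambda_l^*|\mathbf{X}}$ converge to a common limit that depends continuously on the variance parameters, and then invoke $\mathbf{S}^2\xrightarrow{P}\boldsymbol{\sigma}^2$ to close the argument. First I would observe that $\lambda_l$ depends on the data only through variance MLEs and is therefore invariant under a common shift of all observations; so under $H_0$ its distribution depends only on $\boldsymbol{\sigma}^2=(\sigma_0^2,\ldots,\sigma_k^2)$. Denote this CDF by $G_n(\cdot;\boldsymbol{\sigma}^2)$. Because the bootstrap samples $X_{ij}^*\sim N(0,S_i^2)$ are themselves a null draw with variance vector $\mathbf{S}^2$, the bootstrap conditional CDF equals $G_n(\cdot;\mathbf{S}^2)$, and the theorem reduces to
$$
\sup_x\bigl|G_n(x;\mathbf{S}^2)-G_n(x;\boldsymbol{\sigma}^2)\bigr|\xrightarrow{P}0.
$$

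Next I would derive the asymptotic null law. A second-order Taylor expansion of the log-likelihood in (\ref{2.1}) around the true parameters, combined with the characterization of the restricted MLE as the isotonic projection of $\overline{\underline{\bf x}}$ onto the tree cone with weights $w_i=n_i/\sigma_i^2$ (the very step driving Algorithm \ref{Algorithm 1}), shows that
$$
-2\log\lambda_l=\min_{c\in\mathbb{R}}\sum_{i=0}^{k} w_i(\overline X_i-c)^2-\min_{\underline{\boldsymbol{\mu}}\in I}\sum_{i=0}^{k} w_i(\overline X_i-\mu_i)^2+o_P(1).
$$
By standard constrained-LRT theory (see, for instance, Silvapulle and Sen, 2005, Ch.~3), this quantity converges in distribution to a chi-bar-squared law $\bar\chi^2(\boldsymbol{\sigma}^2)=\sum_{j=0}^{k}p_j(\boldsymbol{\sigma}^2)\chi_j^2$, whose mixing weights $p_j$ are Gaussian orthant probabilities and hence continuous functions of $\boldsymbol{\sigma}^2$. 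Write $G_\infty(\cdot;\boldsymbol{\tau}^2)$ for the induced limit CDF of $\lambda_l$; it is jointly continuous in $(x,\boldsymbol{\tau}^2)$ away from the single atom at $\lambda_l=1$ arising from the $\chi_0^2$ component.

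For the closing step, $S_i^2\xrightarrow{P}\sigma_i^2$ by the strong law of large numbers (valid since $\min_i n_i\to\infty$), so via the triangle inequality
$$
\bigl|G_n(x;\mathbf{S}^2)-G_n(x;\boldsymbol{\sigma}^2)\bigr|\leq\bigl|G_n(x;\mathbf{S}^2)-G_\infty(x;\mathbf{S}^2)\bigr|+\bigl|G_\infty(x;\mathbf{S}^2)-G_\infty(x;\boldsymbol{\sigma}^2)\bigr|+\bigl|G_\infty(x;\boldsymbol{\sigma}^2)-G_n(x;\boldsymbol{\sigma}^2)\bigr|,
$$
the $\sup_x$ of the first and third terms tend to $0$ in probability by a Polya-type uniform convergence of $G_n(\cdot;\boldsymbol{\tau}^2)\to G_\infty(\cdot;\boldsymbol{\tau}^2)$ on a shrinking neighborhood of $\boldsymbol{\sigma}^2$, while the middle term tends to $0$ by continuity of $G_\infty$ in $\boldsymbol{\tau}^2$ together with the continuous-mapping theorem.

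The hardest part will be the second step: establishing the chi-bar-squared limit under the tree-order constraint with heteroscedastic groups and verifying that the mixing weights depend continuously on $\boldsymbol{\sigma}^2$ at the true parameter, where the active face of the cone is random. A secondary subtlety is the atom in the limit at $\lambda_l=1$, which blocks a direct appeal to Polya's theorem; I would handle it by noting that the atom size is itself a continuous orthant probability, so the jumps match asymptotically between the null and bootstrap limit laws and the sup-norm difference still vanishes.
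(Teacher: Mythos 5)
Your overall architecture---reduce bootstrap consistency to continuity of the null law in the variance vector, pass through a common asymptotic law, and close with $S_i^2 \stackrel{P}{\longrightarrow} \sigma_i^2$---is a legitimate and standard strategy, and it is necessarily more explicit than what the paper offers: the paper's proof of Theorem \ref{Theorem 2} is a single sentence deferring to \cite{mondal2023testing}, so there is no detailed in-paper argument to compare against. That said, two of your steps are asserted exactly where they carry essentially the entire burden of the theorem. First, in the term $\sup_x|G_n(x;\mathbf{S}^2)-G_\infty(x;\mathbf{S}^2)|$ the parameter is \emph{random}, so pointwise-in-$\boldsymbol{\tau}^2$ convergence $G_n(\cdot;\boldsymbol{\tau}^2)\to G_\infty(\cdot;\boldsymbol{\tau}^2)$ is not enough; you need this convergence to hold uniformly over $\boldsymbol{\tau}^2$ in a neighbourhood of $\boldsymbol{\sigma}^2$, or an equicontinuity/subsequence argument in its place. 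Labelling this a ``Polya-type uniform convergence on a shrinking neighborhood'' does not establish it, and it is precisely the technical content of bootstrap consistency here; without it the triangle inequality proves nothing.

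Second, your common limit $G_\infty(\cdot;\boldsymbol{\tau}^2)$ need not exist under the hypothesis actually assumed in Theorem \ref{Theorem 2}, which is only $\min_{0\leq i\leq k} n_i\to\infty$. The chi-bar-squared mixing weights for the tree-order cone depend on the relative weights $w_i=n_i/\sigma_i^2$, hence on the sample-size proportions $n_i/N$; these proportions are only assumed to converge in the asymptotic setup (\ref{3.3}) invoked for Theorem \ref{Theorem 4}, not here. You would either need to import that assumption or run a subsequence argument (every subsequence of proportion vectors has a convergent further subsequence, along which your limit exists and the sup-norm bound follows, which suffices for convergence in probability of the whole sequence). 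Finally, the atom of the $\bar\chi^2$ law at $-2\log\lambda_l=0$ does block a naive appeal to Polya's theorem; your proposed fix (matching jump sizes between the two limit laws) is the right idea but must actually be carried out, since uniform convergence of CDFs to a discontinuous limit requires separate control at the atom and at continuity points. None of these gaps shows the approach is wrong, but as written the argument is a scaffold: the chi-bar-squared limit under heteroscedastic tree order, its local uniformity in the variance parameter, and the atom bookkeeping are each stated rather than proved.
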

\begin{proof}
The proof follows steps similar to those given in \cite{mondal2023testing}.
\end{proof}

\cite{lee1988quadratic} showed that the mean square error of the restricted MLE of the mean effect of the control group exceeds that of the unrestricted MLE as the number of treatments increases. As a consequence, the performance of the LRT may be affected. Therefore, in the next section, we propose two new test procedures based on pairwise standardized differences of sample means.

\section{Bootstrap Based Max-D and Min-D Tests}\label{Bootstrap Max-D Min-D}

 Let    
    $D_i=\frac{\overline{X}_{i}-\overline{X}_0}{\sqrt{\frac{S^2_{i}}{n_{i}}+\frac{S^2_{0}}{n_{0}}}},\;i=1,2,\cdots k$ and ${\bf\underline {D}}=(D_1, D_2,\cdots, D_{k})$. 

    Consider the test statistics  \begin{equation}
        D_{Max} = \text{max}\;{\bf{\underline {D}}}, \;\;\text{and}\;\;
        D_{Min} = \text{min}\;\bf{\underline {D}}.\label{3.1}
    \end{equation}

We reject the null hypothesis $H_0$ using $D_{Max}$ if $D_{Max}>d_{Max}$ and using $D_{Min}$ if $D_{Min}>d_{Min}$. Here, critical points $d_{Max}$ and $d_{Min}$ are determined so that the tests have size $\alpha$. The null distributions of the test statistics cannot be obtained in closed forms, so we use the parametric bootstrap to find the critical points.

In practical applications, it is noted that there may be two types of situations with respect to the alternative hypothesis $H_1$. For example, in comparing the efficacy of various drugs for a certain disease, the control group may be given a placebo. In this case, there may be strict inequality for all $i$ in $H_1$ as patients not being treated with any drug may have inferior outcomes in terms of survival rates, time to cure, etc. In comparing the breaking strengths of the front parts of Formula One cars, some cars may use materials with similar breaking strengths, so there may be some equalities in $H_1$. We see in simulation studies that Min-D performs better in terms of power values in the first situation case and Max-D in the second case.  

We have also considered modified versions of Min-D and Max-D by replacing treatment means with the restricted MLEs and standardizing using standard errors of these differences. However, it was observed in simulation studies that the empirical size and power values of these tests are almost the same as those for Min-D and Max-D. This may be due to the earlier result of \cite{lee1988quadratic} that the restricted MLE of the control group effect performs worse than the sample mean when number of groups is large. Further, the computational complexity of these modified test procedures is at least ten times that of Min-D and Max-D when the number of samples in the simulation and bootstrap steps is moderate. Computation time increases exponentially as number of samples increases. Hence, we have not included these tables in our studies.

The procedure for determining the critical points, empirical size, and power values for Max-D and Min-D tests is described in Algorithm \ref{Algorithm 3}.







\begin{algorithm}
    \caption{Method to evaluate size and power of Max-D and Min-D tests}\label{Algorithm 3}

{Step 1.} Perform Steps 1-3 from Algorithm \ref{Algorithm 2}.

{Step 2.} Now, generate bootstrap samples ($X_{ij}^*$) from normal distribution with mean 0 and variance $S_i^2$ and calculate bootstrap test statistics  $$D_{Max}^* = \text{max}\;{\bf{\underline {D}^*}}, \;\;\text{and}\;\; \\ D_{Min}^* = \text{min}\;\bf{\underline {D}^*},$$ where ${\bf\underline {D}^*}=(D_1^*, D_2^*,\cdots, D_{k}^*),\; D_i^*=\frac{{\overline{X}}{^*_i}-{\overline{X}}{^*_0}}{\sqrt{\frac{S^{*2}_{i}}{n_{i}}+\frac{S^{*2}_{0}}{n_{0}}}},\;(i=1,2,\cdots k)$.

{Step 3.} Repeat Step 2 many times, say, $K = 5000$. Subsequently, we obtain 5000 $D_{Max}^*$ and $D_{Min}^*$ values. Then arrange K values of $D_{Max} ^*$ and $D_{Min}^*$ in non-decreasing order, say $D_{Max(1)}^*\leq D_{Max(2)}^*\leq \cdots \leq D_{Max(K)}^*$ and $D_{Min(1)}^*\leq D_{Min(2)}^*\leq \cdots \leq D_{Min(K)}^*$. 

{Step 4.} Calculate the bootstrap quantile $d_{Max}^*$, $d_{Min}^*$ as the $1-\alpha$ quantile. Then reject $H_0$ at $\alpha$ significance level if $D_{Max}>d_{Max}^*$ and $D_{Min}>d_{Min}^*$. In other words, $$d_{Max}^*:=D_{Max([(1-\alpha)K])}^* \;\; \text{and} \;\; d_{Min}^*:=D_{Min([(1-\alpha)K])}^*.$$

{Step 5.} Using data from Step 1, calculate test statistics, $D_{Max}$ and $D_{Min}$.

{Step 6.} Now repeat Steps 1 to 5 for a large number of times, say $L = 5000,$ obtain L number of values of $d_{Max}^*,$ $d_{Min}^*,$ $D_{Max}$ and $D_{Min}$ say $d_{Max_r}^*,$ $d_{Min_r}^*,$ $D_{Max_r}$ and $D_{Min_r}, r=1,2,\cdots L.$ 

{Step 7.} Finally, one can get the estimated value of sizes for Max-D and Min-D as $$\alpha_{Max} = \frac{\text{number of times}\; D_{Max_r} > d_{Max_r}^*}{L} \;\;\text{and} $$$$ \alpha_{Min} = \frac{\text{number of times}\; D_{Min_r} > d_{Min_r}^*}{L}.$$

{Step 10.} To calculate power, we use $\mu_0\leq \mu_i$ (with a strict inequality in Step 1) and follow the same procedure.
\end{algorithm}

Using the critical point for Max-D statistic, $(1-\alpha)$ one sided simultaneous confidence intervals for $\mu_i-\mu_0,~~ i=1, \ldots, k$ are given by 
$\left(\overline{X}_i - \overline{X}_0 - d_{Max}^* \sqrt{\frac{S^2_{i}}{n_{i}} + \frac{S^2_{0}}{n_{0}}}, \infty\right),~~ i=1, \ldots, k$.

\subsection{Asymptotic Accuracy}\label{Section 3.1}
In this section, we prove the asymptotic accuracy of parametric bootstrap-based Max-D and Min-D tests under the asymptotic setup
\begin{equation}
    \frac{n_i}{N}\rightarrow p_i \;{\text as}\; \underset{0 \leq i \leq k}{\min} n_i \rightarrow \infty\; {\text where}\; i=0,1,\cdots k, \;N= \sum\limits_{i=0}^{k}n_i.\label{3.3}
\end{equation}

The following theorem demonstrates the asymptotic precision of the bootstrap-based Max-D and Min-D tests.
\begin{theorem}
\label{Theorem 4}
    Under the asymptotic setup (\ref{3.3}) $$\sup\limits_{\underline t \in \mathbb{R}^k}|F_{\underline{{\mathbf{D}}}^*|\underline{\mathbf{X}}}(\underline t)-F_{\underline{{\mathbf{D}}}|H_0}(\underline t)|\stackrel{P}{\longrightarrow}0,$$
    where $F_{\underline{{\mathbf{D}}}^*|\underline{\mathbf{X}}}(\underline t)$ denote the conditional distribution function of $\mathbf{D}^*$ given observations $\underline{\mathbf{X}}$ and $F_{\underline{{\mathbf{D}}}|H_0}(\underline t)$ denote the null distribution function of  $\mathbf{D}.$
\end{theorem}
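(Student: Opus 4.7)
The plan is to show that both $F_{\underline{\mathbf{D}}\mid H_0}$ and the (random) conditional law $F_{\underline{\mathbf{D}}^*\mid\underline{\mathbf{X}}}$ converge to a common continuous multivariate normal limit, and then upgrade weak convergence to uniform convergence of distribution functions via Polya's theorem, closing the argument by a triangle inequality.

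First I would identify the limiting null distribution of $\underline{\mathbf{D}}$. Under $H_0$ all means coincide with a common $\mu$; setting $Y_i=\sqrt{N}(\overline{X}_i-\mu)$ and using independence across groups together with the classical CLT and the asymptotic regime (\ref{3.3}), the $Y_i$'s are asymptotically independent centred normals with variances $\sigma_i^2/p_i$. Since $S_i^2\xrightarrow{P}\sigma_i^2$ by the law of large numbers, one has $N(S_i^2/n_i+S_0^2/n_0)\xrightarrow{P}\sigma_i^2/p_i+\sigma_0^2/p_0$, and an application of Slutsky's theorem and the Cramér--Wold device gives $\underline{\mathbf{D}}\xrightarrow{d}\underline{\mathbf{W}}$, where $\underline{\mathbf{W}}$ is a centred Gaussian vector with unit marginal variances and covariances $\mathrm{Cov}(W_i,W_j)=(\sigma_0^2/p_0)/\sqrt{(\sigma_i^2/p_i+\sigma_0^2/p_0)(\sigma_j^2/p_j+\sigma_0^2/p_0)}$ inherited from the shared $\overline{X}_0$.

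Second, I would establish the analogous convergence for the conditional bootstrap law. Given $\underline{\mathbf{X}}$, the bootstrap observations $X_{ij}^*\sim N(0,S_i^2)$ are independent, so conditionally $\overline{X}_i^*\sim N(0,S_i^2/n_i)$ exactly and $(n_i-1)S_i^{*2}/S_i^2\sim\chi^2_{n_i-1}$. Hence $S_i^{*2}/S_i^2\xrightarrow{P}1$ conditionally, and combined with $S_i^2\xrightarrow{P}\sigma_i^2$, the conditional characteristic function of $\underline{\mathbf{D}}^*$ at $\underline{\mathbf{t}}$ (which can be written in closed form since $\underline{\mathbf{D}}^*$ is a ratio of jointly Gaussian quantities with data-dependent variance) converges in probability to that of $\underline{\mathbf{W}}$. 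Thus, on a set of $\underline{\mathbf{X}}$-probability tending to one, $\underline{\mathbf{D}}^*\xrightarrow{d}\underline{\mathbf{W}}$ conditionally, paralleling the argument sketched for Theorem \ref{Theorem 2} and following the template of \cite{mondal2023testing}.

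Third, because $\underline{\mathbf{W}}$ has a continuous multivariate normal distribution (non-degenerate under $\sigma_i^2>0$ and $0<p_i<1$), Polya's theorem upgrades each of the above weak convergences to uniform convergence of distribution functions, giving
\[
\sup_{\underline{\mathbf{t}}\in\mathbb{R}^k}\bigl|F_{\underline{\mathbf{D}}\mid H_0}(\underline{\mathbf{t}})-F_{\underline{\mathbf{W}}}(\underline{\mathbf{t}})\bigr|\longrightarrow 0\quad\text{and}\quad\sup_{\underline{\mathbf{t}}\in\mathbb{R}^k}\bigl|F_{\underline{\mathbf{D}}^*\mid\underline{\mathbf{X}}}(\underline{\mathbf{t}})-F_{\underline{\mathbf{W}}}(\underline{\mathbf{t}})\bigr|\xrightarrow{P}0,
\]
and the triangle inequality yields the claim. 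I expect the main obstacle to be the conditional step: justifying that the multivariate CLT and Slutsky's theorem can be applied \emph{conditionally} on $\underline{\mathbf{X}}$ in a mode strong enough (convergence in probability of the conditional characteristic function at each $\underline{\mathbf{t}}$) to combine with Polya's theorem over a random limit. This is standard in the bootstrap literature but requires careful measure-theoretic bookkeeping; it simplifies here because the bootstrap variables are exactly Gaussian, so the convergence ultimately reduces to the continuous mapping theorem applied to $(S_0^2,\ldots,S_k^2)$ together with the exact conditional Gaussianity of the numerators and the exact conditional $\chi^2$ law of the denominators.
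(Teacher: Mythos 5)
Your proposal is correct, and your computation of the limiting covariance $\mathrm{Cov}(W_i,W_j)$ agrees with what one gets from the paper's matrices ${\bf Q}$ and ${\bf B}$ (it equals $\sigma_0^2\sqrt{p_ip_j}/(c_ic_j)$ in the paper's notation, with $c_i^2=p_i\sigma_0^2+p_0\sigma_i^2$). However, you take a genuinely different route from the paper. The paper never identifies a limiting distribution $\underline{\mathbf{W}}$: it exploits the fact that both the vector of scaled sample means $\underline{\mathbf{V}}$ (under $H_0$) and its bootstrap counterpart $\underline{\mathbf{V}}^*$ (conditionally on the data) are \emph{exactly} multivariate normal, writes down the closed-form Kullback--Leibler divergence $D_{KL}(\underline{\mathbf{V}}^*\Vert\underline{\mathbf{V}})=\frac{1}{2}[\sum_i\log(\sigma_i^2/S_i^2)+\sum_i S_i^2/\sigma_i^2-(k+1)]$, shows it tends to $0$ in probability, pushes this through the contrast map ${\bf B}$ (data-processing) and Pinsker's inequality to get uniform closeness of the distribution functions of $\underline{\mathbf{Z}}^*$ and $\underline{\mathbf{Z}}$, and then finishes with a multivariate Slutsky step using $\hat{\bf Q},\hat{\bf Q}^*\stackrel{P}{\longrightarrow}{\bf Q}$. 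Your argument instead follows the classical bootstrap-consistency template: both laws converge to a common continuous Gaussian limit, then multivariate Polya plus the triangle inequality. Each approach has something to recommend it: the KL route gives a direct, quantitative comparison of the two finite-sample laws without computing the limit, but is tied to the Gaussian closed form of the divergence; your route requires identifying $\underline{\mathbf{W}}$ and the measure-theoretic care you flag for conditional weak convergence (handled by the standard subsequence argument), but is more portable --- it would survive non-normal errors via the CLT, which is relevant to the robustness study in Sect.~\ref{Robustness}. One small caveat: the conditional characteristic function of $\underline{\mathbf{D}}^*$ is not really available in closed form (the statistic is a multivariate $t$-type ratio), but this is immaterial since conditional Slutsky applied to $\hat{\bf Q}^*\underline{\mathbf{Z}}^*$ does the job, exactly as you note at the end.
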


\begin{proof}
    Given samples $X_{ij}\sim N(\mu_i, \sigma_i^2), i = 0,1,\ldots, k.$ Under $H_0,$ $\mu_i=\mu$ for all $i$ (say). We can take $\mu=0$  without loss of generality. Now under $H_0$
    $$\underline{\bf V} = (\sqrt{N}{\overline{X}}{_0}\;, \sqrt{N}{\overline{X}}{_1}\;, \ldots 
    \sqrt{N}{\overline{X}}{_k})^T \sim N_{k+1}(\underline{\bf 0}, \boldsymbol{\Sigma_V})$$
    and
$$\underline{\bf V}^* = (\sqrt{N}{\overline{X}}{^*_0}\;, \sqrt{N}{\overline{X}}{^*_1}\;, \ldots 
    \sqrt{N}{\overline{X}}{^*_k})^T \sim N_{k+1}(\underline{\bf 0}, \boldsymbol{\hat{\Sigma}_V}),\; $$
    where
$\boldsymbol{\Sigma_V} = \text{diag}\left(\frac{N\sigma_0^2}{n_0}, \frac{N\sigma_1^2}{n_1}, \cdots, \frac{N\sigma_k^2}{n_k}\right)$ and $\boldsymbol{\hat{\Sigma}_V} = \text{diag}\left(\frac{NS_0^{2}}{n_0}, \frac{NS_1^{2}}{n_1}, \cdots, \frac{NS_k^{2}}{n_k}\right).$
\\
Utilizing the Kullback-Leibler divergence of $\underline{\bf V}^*$ from $\underline{\bf V}$, we obtained 
\begin{equation}
    D_{KL}(\underline{\bf V}^*||\underline{\bf V}) = \frac{1}{2}\left[\sum\limits_{i=0}^k\log\left(\frac{\sigma_i^2}{S_i^{2}} \right) +\sum\limits_{i=0}^k\frac{S_i^{2}}{\sigma_i^2} -(k+1) \right]\stackrel{P}{\longrightarrow}0 \;\;\text{as} \;\; \underset{0 \leq i \leq k}{\min} n_i \rightarrow \infty.\label{3.4}
\end{equation}

Define ${\bf B}=\begin{bmatrix}
  -1 & 1 & 0 &\cdots & 0 \\
  -1 & 0 & 1 &\cdots & 0 \\
  \vdots & \vdots & \vdots & \ddots & \vdots \\
  -1 & 0 & 0 &\cdots & 1 \\
\end{bmatrix}$,
$Z_i = \sqrt{N}({\overline{X}}{_i}-{\overline{X}}{_0}),$ $Z_i^* = \sqrt{N}({\overline{X}}{^*_i}-{\overline{X}}{^*_0}),$ $i=1,2,\ldots,k.$
Then we can write $$\underline{\bf Z} = (Z_1,Z_2,\cdots,Z_k)^T={\bf B}\underline{\bf V}\;\;{\text{and}}\;\;\underline{\bf Z}^* = (Z_1^*,Z_2^*,\cdots,Z_k^*)^T={\bf B}\underline{\bf V}^*$$
From equation (\ref{3.4}) we get $D_{KL}(\underline{\bf Z}^*||\underline{\bf Z})\stackrel{P}{\longrightarrow}0$ as $\underset{0 \leq i \leq k}{\min} n_i \rightarrow \infty.$ This implies \begin{equation}
    \sup\limits_{\underline z \in \mathbb{R}^k}|F_{\underline{{\mathbf{Z}}}^*|\underline{\mathbf{X}}}(\underline z)-F_{\underline{{\mathbf{Z
    }}}|H_0}(\underline z)|\stackrel{P}{\longrightarrow}0, \label{3.5}
\end{equation}

Consider $\hat{\bf Q}= \text{diag}\left(\frac{\sqrt{n_1n_0}}{N\hat{c}_1}, \frac{\sqrt{n_2n_0}}{N\hat{c}_2}, \ldots, \frac{\sqrt{n_kn_0}}{N\hat{c}_k}\right)$, $\hat c_i^2=\frac{n_0}{N}S_i^2 + \frac{n_i}{N} S_0^2,$ and $\hat{\bf Q}^*=\text{diag}\left(\frac{\sqrt{n_1n_0}}{N}\frac{1}{\hat{c}_1^*}, \frac{\sqrt{n_2n_0}}{N}\frac{1}{\hat{c}_2^*}, \ldots, \frac{\sqrt{n_kn_0}}{N}\frac{1}{\hat{c}_k^*}\right)$, $\hat{c}^{*2}_i={\frac{n_i}{N}S^{*2}_{0}+\frac{n_0}{N}S^{*2}_{i}},$ $i=1,2,\cdots,k,$
then $\underline{\bf D}$ and $\underline{\bf D}^*$ can be written as $\underline{\bf D} = \hat{\bf Q}{\underline{{\mathbf{Z}}}}$ and $\underline{\bf D}^* = \hat{\bf Q}^*{\underline{{\mathbf{Z}}}}^*$.

Also using the fact that as the number of observations $n_i$ approaches infinity for $i=0,1,\cdots, k$,  $S_i^{*2}\stackrel{P}{\longrightarrow} \sigma_i^2$ in conditional probability, given observations $\underline{\bf X}$ showed by \cite{bickel1981some}. Therefore we obtained $\hat{c}_i^{*2}\stackrel{P}{\longrightarrow}c_i^2={p_i\sigma^{2}_{0}+p_0\sigma^{2}_{i}},\;i=1,2,\ldots,k$. Also it is easy to observe that $\hat{c}_i^{2}\stackrel{P}{\longrightarrow}c_i^2,\;i=1,2,\ldots,k$.

 Under (\ref{3.3}), it is clear that \begin{equation}
     \hat{\bf Q}\stackrel{P}{\longrightarrow}{\bf Q} \;\;\text{and}\;\; \hat{\bf Q}^*\stackrel{P}{\longrightarrow}{\bf Q},\label{3.6}
 \end{equation}
 where ${\bf Q} = \text{diag}\left(\frac{{\sqrt{p_1p_0}}}{{c}_1}, \frac{\sqrt{p_2p_0}}{{c}_2}, \ldots, \frac{\sqrt{p_kp_0}}{{c}_k}\right)$.
By using the multivariate Slutsky theorem and utilizing equations (\ref{3.5}) and (\ref{3.6}), we obtain the desired result.
\end{proof}

\section{Size and Power Analysis}\label{Size and Power}
Here, we present an extensive simulation study about the size and power values of the three tests, Max-D, Min-D, and LRT, for various configurations of sample sizes and variances. Empirical size values have been computed for a nominal size $\alpha = 0.05.$ Five thousand random samples $X_{ij},~ j = 1, 2,\ldots, n_i $ are generated from normal populations with a mean $\mu$ and variance $\sigma_i^2, ~i=0,1,\ldots,k$. 
For each sample, 5000 bootstrap samples are generated from $N(0, S_i^2), ~i=0, \ldots, k$ populations.  
Algorithms \ref{Algorithm 2} and  \ref{Algorithm 3} are used to compute the critical points, empirical sizes, and powers of the three tests. For computing critical points and size values, $\mu=0$ is taken as the null distributions of test statistics are independent of the location.  For power calculations, we choose the configurations for $\mu_i$'s as $c(\mu_0, \mu_1,\ldots,\mu_k)$ for $k = 3, 4,$ where $c$ ranges from 1 to 6.1 with an increment of 0.3. This type of choice of the treatment means is selected to observe the effect on powers of tests, as the differences between the values of $\mu_i$ with those of $\mu_0$, increase.


The empirical sizes of all three tests are reported in Tables \ref{Table 1} -  \ref{Table 3} for $k=2,3,4$ by taking small with equal, small with unequal, and moderate with unequal treatment sample sizes with homogeneous, heterogeneous, and highly heterogeneous variances configurations. Further, we provide power curve plots for $k=2,3$ by considering different mean vectors $(\underline{\boldsymbol\mu}=c(1, 1.3, 1.6), c(1, 1.3, 1), c(1, 1.3, 1.3), c(1, 1.6, 1.9, 1.3),\\ c(1, 1.6, 1.3, 1), c(1, 1.6, 1.6, 1.6))$ with four specific cases of other parameters given in Table \ref{Table 0}.

\begin{table}[!ht]
    \caption{Configurations of parameters for size and power calculations}
    \label{Table 0}
    \centering \setlength\tabcolsep{1pt}
    \begin{footnotesize}
    \begin{tabular}{c | c}
    \hline
        \textbf{Sample Sizes} $({N})$ & \textbf{Variances} ($\boldsymbol{\underline{\sigma}^2}$)\\
        \hline
        \multicolumn{2}{c}{\textbf{Empirical size configurations}} \\
        \hline
        $N_1(5,5,5)$, $N_2(20,10,25)$, & Homogeneous\\
        $N_3(5,8,12,10)$, $N_4(20,15,35,25)$,  & Moderate heterogeneous \\
         $N_5(10,10,10,10,10)$, $N_6(5,5,5,20,15)$ & Highly heterogeneous \\
                                                                   & (Given in Table \ref{Table 1} - \ref{Table 3})\\
        \hline
        \multicolumn{2}{c}{\textbf{Power curve configurations}} \\
        \hline
        (a) {Small sample Sizes:} $N_1(5,5,5)$, $N_5(5,8,12,10)$ & Moderate heterogeneous: $(2,3,4)$, $(2,3,4,5)$\\
        \hline
        (b) {Moderate sample Sizes:} $N_1(20,10,25)$, $N_5(20,15,35,25)$ & Moderate heterogeneous: $(2,3,4)$, $(2,3,4,5)$\\
        \hline
        (c) {Small Sample Sizes:} $N_1(5,5,5)$, $N_5(5,8,12,10)$ & High heterogeneous: $(3,1,6)$, $(3,1,6,4)$\\
        \hline
        (d) {Small Sample Sizes:} $N_1(20,10,25)$, $N_5(20,15,35,25)$ & High heterogeneous: $(3,1,6)$, $(3,1,6,4)$\\
        \hline
    \end{tabular}
    \end{footnotesize}
\end{table}

\begin{table}[!ht]
    \caption{Comparison of the estimated sizes of LRT, Max-D, Min-D tests, and PB-AZ test  for $k=2,\;\underline{\boldsymbol{\mu}}=(1,1,1)$, and $\alpha=0.05$}   
    \label{Table 1}
    \vspace{2mm}
    \begin{footnotesize}
    \centering
        \begin{tabular}{|c|c|c|}
            \hline
             &$N_1(5,5,5)$ &$N_2(20,10,25)$ \\
            \hline
            {$\boldsymbol{\underline{\sigma}^2}$} & {\;\; Max-D \;\;\; LRT \;\;\; Min-D \;\; PB-AZ } & {\;\; Max-D \;\;\; LRT \;\;\; Min-D \;\; PB-AZ } \\
            \hline
    (0.5,1,1.5) & 0.0485 \;\; 0.0498 \;\; 0.0504 \;\; 0.0424 & 0.0534 \;\; 0.0541 \;\; 0.0530 \;\; 0.0496 \\
        (1,2,5) & 0.0511 \;\; 0.0514 \;\; 0.0499 \;\; 0.0464 & 0.0519 \;\; 0.0527 \;\; 0.0534 \;\; 0.0516 \\
       (4,8,10) & 0.0483 \;\; 0.0498 \;\; 0.0509 \;\; 0.0448 & 0.0531 \;\; 0.0535 \;\; 0.0532 \;\; 0.0496 \\
       (15,3,2) & 0.0511 \;\; 0.0500 \;\; 0.0616 \;\; 0.0560 & 0.0522 \;\; 0.0520 \;\; 0.0525 \;\; 0.0472 \\
        (1,1,1) & 0.0506 \;\; 0.0497 \;\; 0.0551 \;\; 0.0432 & 0.0526 \;\; 0.0530 \;\; 0.0513 \;\; 0.0512 \\
    (2.5,4,5.5) & 0.0482 \;\; 0.0503 \;\; 0.0503 \;\; 0.0432 & 0.0523 \;\; 0.0528 \;\; 0.0529 \;\; 0.0504 \\
      (13,10,7) & 0.0511 \;\; 0.0495 \;\; 0.0565 \;\; 0.0460 & 0.0545 \;\; 0.0529 \;\; 0.0503 \;\; 0.0492 \\
       (15,5,3) & 0.0505 \;\; 0.0491 \;\; 0.0606 \;\; 0.0508 & 0.0524 \;\; 0.0523 \;\; 0.0513 \;\; 0.0476 \\
  (0.1,0.2,0.3) & 0.0485 \;\; 0.0498 \;\; 0.0504 \;\; 0.0424 & 0.0534 \;\; 0.0541 \;\; 0.0530 \;\; 0.0496 \\
      (30,35,5) & 0.0525 \;\; 0.0533 \;\; 0.0549 \;\; 0.0520 & 0.0536 \;\; 0.0525 \;\; 0.0500 \;\; 0.0512 \\
       (1,1,20) & 0.0524 \;\; 0.0541 \;\; 0.0486 \;\; 0.0460 & 0.0514 \;\; 0.0510 \;\; 0.0514 \;\; 0.0524 \\
     (50,60,70) & 0.0493 \;\; 0.0502 \;\; 0.0526 \;\; 0.0404 & 0.0527 \;\; 0.0529 \;\; 0.0511 \;\; 0.0488 \\
          \hline
    \end{tabular}
    \end{footnotesize}
\end{table}

\begin{table}[!ht]
    \caption{Comparison of the estimated sizes of LRT, Max-D, Min-D tests, and PB-AZ test  for $k=3,\;\underline{\boldsymbol{\mu}}=(1,1,1,1)$, and $\alpha=0.05$}   
    \label{Table 2}
    \begin{footnotesize}
    \vspace{2mm}
    \centering
        \begin{tabular}{|c|c|c|}
            \hline
             &$N_3(5,8,12,10)$ &$N_4(20,15,35,25)$ \\
            \hline
            {$\boldsymbol{\underline{\sigma}^2}$} & {\;\; Max-D \;\;\; LRT \;\;\; Min-D \;\; PB-AZ } & {\;\; Max-D \;\;\; LRT \;\;\; Min-D \;\; PB-AZ } \\
            \hline
     (.05,.06,.08,.1) & 0.0466 \;\; 0.0522 \;\; 0.0607 \;\; 0.0468 & 0.0511 \;\; 0.0498 \;\; 0.0506 \;\; 0.0512 \\
            (1,4,3,3) & 0.0454 \;\; 0.0496 \;\; 0.0556 \;\; 0.0512 & 0.0504 \;\; 0.0496 \;\; 0.0504 \;\; 0.0520 \\
            (2,2,2,2) & 0.0483 \;\; 0.0513 \;\; 0.0619 \;\; 0.0544 & 0.0507 \;\; 0.0493 \;\; 0.0508 \;\; 0.0468 \\
            (4,1,1,2) & 0.0477 \;\; 0.0532 \;\; 0.0650 \;\; 0.0464 & 0.0487 \;\; 0.0498 \;\; 0.0498 \;\; 0.0492 \\
            (1,6,6,1) & 0.0458 \;\; 0.0485 \;\; 0.0524 \;\; 0.0492 & 0.0500 \;\; 0.0488 \;\; 0.0508 \;\; 0.0516 \\
            (4,6,8,9) & 0.0457 \;\; 0.0515 \;\; 0.0597 \;\; 0.0540 & 0.0517 \;\; 0.0507 \;\; 0.0498 \;\; 0.0504 \\
            (7,9,8,3) & 0.0489 \;\; 0.0511 \;\; 0.0630 \;\; 0.0560 & 0.0502 \;\; 0.0492 \;\; 0.0498 \;\; 0.0540 \\
         (15,11,23,5) & 0.0499 \;\; 0.0532 \;\; 0.0614 \;\; 0.0496 & 0.0481 \;\; 0.0500 \;\; 0.0524 \;\; 0.0520 \\
        (20,30,40,50) & 0.0459 \;\; 0.0519 \;\; 0.0589 \;\; 0.0556 & 0.0516 \;\; 0.0504 \;\; 0.0497 \;\; 0.0496 \\
          (15,12,8,2) & 0.0491 \;\; 0.0513 \;\; 0.0649 \;\; 0.0528 & 0.0495 \;\; 0.0511 \;\; 0.0511 \;\; 0.0520 \\
         (10,10,8,10) & 0.0483 \;\; 0.0517 \;\; 0.0622 \;\; 0.0520 & 0.0509 \;\; 0.0491 \;\; 0.0502 \;\; 0.0512 \\
          (1,1,20,20) & 0.0487 \;\; 0.0504 \;\; 0.0488 \;\; 0.0492 & 0.0509 \;\; 0.0513 \;\; 0.0494 \;\; 0.0524 \\
          \hline
    \end{tabular} 
    \end{footnotesize}
\end{table}

\begin{table}[ht]
    \caption{Comparison of the estimated sizes of LRT, Max-D, Min-D tests, and PB-AZ test  for $k=4,\;\underline{\boldsymbol{\mu}}=(1,1,1,1,1)$, and $\alpha=0.05$}   
    \label{Table 3}
    \begin{footnotesize}
    \vspace{2mm}
    \centering
        \begin{tabular}{|c|c|c|}
            \hline
            &$N_{5}(10,10,10,10,10)$ &$N_{6}(5,5,5,20,15)$ \\
            \hline
            {$\boldsymbol{\underline{\sigma}^2}$} & {\;\; Max-D \;\;\; LRT \;\;\; Min-D \;\; PB-AZ } & {\;\; Max-D \;\;\; LRT \;\;\; Min-D \;\; PB-AZ } \\
            \hline
      (.05,.06,.08,.1,1) & 0.0523 \;\; 0.0505 \;\; 0.0567 \;\; 0.0480 & 0.0490 \;\; 0.0562 \;\; 0.0565 \;\; 0.0416 \\
          (5,8,12,18,20) & 0.0529 \;\; 0.0529 \;\; 0.0541 \;\; 0.0484 & 0.0513 \;\; 0.0558 \;\; 0.0570 \;\; 0.0468 \\
          (18,14,10,8,3) & 0.0512 \;\; 0.0551 \;\; 0.0577 \;\; 0.0476 & 0.0503 \;\; 0.0578 \;\; 0.0627 \;\; 0.0504 \\
        (80,65,55,30,10) & 0.0509 \;\; 0.0548 \;\; 0.0565 \;\; 0.0496 & 0.0498 \;\; 0.0572 \;\; 0.0621 \;\; 0.0496 \\
        (50,50,50,50,50) & 0.0515 \;\; 0.0530 \;\; 0.0557 \;\; 0.0428 & 0.0528 \;\; 0.0584 \;\; 0.0643 \;\; 0.0500 \\
        (10,15,15,15,10) & 0.0500 \;\; 0.0531 \;\; 0.0534 \;\; 0.0456 & 0.0504 \;\; 0.0577 \;\; 0.0601 \;\; 0.0508 \\
        (50,60,70,80,90) & 0.0527 \;\; 0.0534 \;\; 0.0555 \;\; 0.0456 & 0.0506 \;\; 0.0567 \;\; 0.0624 \;\; 0.0496 \\
         (40,40,40,40,4) & 0.0522 \;\; 0.0543 \;\; 0.0546 \;\; 0.0464 & 0.0502 \;\; 0.0571 \;\; 0.0607 \;\; 0.0488 \\
          (5,6,45,50,50) & 0.0543 \;\; 0.0524 \;\; 0.0536 \;\; 0.0484 & 0.0515 \;\; 0.0551 \;\; 0.0500 \;\; 0.0428 \\
            (2,2,2,2,40) & 0.0534 \;\; 0.0528 \;\; 0.0579 \;\; 0.0468 & 0.0515 \;\; 0.0587 \;\; 0.0594 \;\; 0.0472 \\
        (10,20,30,20,10) & 0.0518 \;\; 0.0527 \;\; 0.0526 \;\; 0.0492 & 0.0525 \;\; 0.0580 \;\; 0.0553 \;\; 0.0500 \\
           (10,10,1,1,1) & 0.0500 \;\; 0.0537 \;\; 0.0563 \;\; 0.0496 & 0.0493 \;\; 0.0563 \;\; 0.0636 \;\; 0.0500 \\
          \hline
    \end{tabular}
    \end{footnotesize}
    
\end{table}

\begin{figure}[!ht]
    \centering
    \includegraphics[width=1\textwidth]{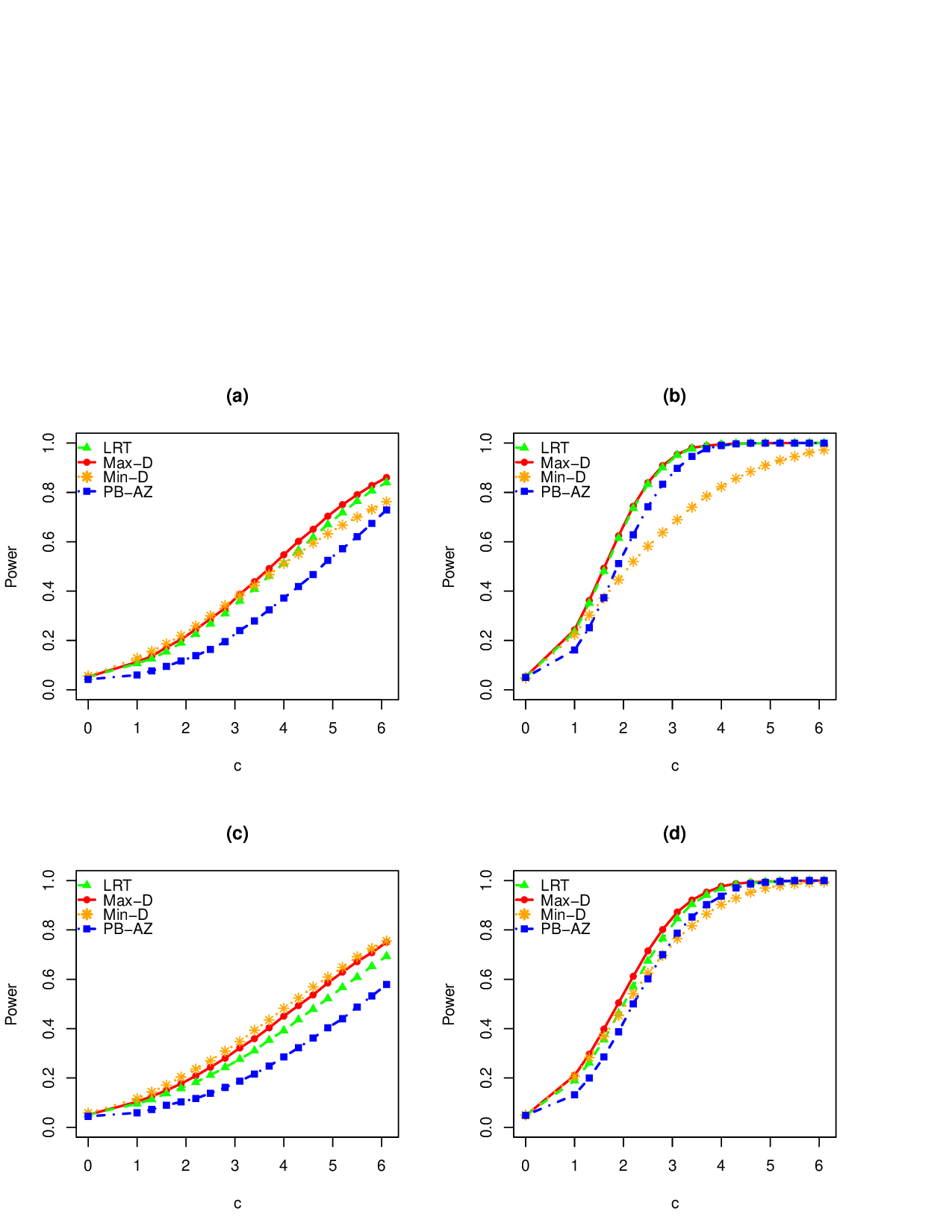}
    \caption{Power curves of the LRT, Max-D, Min-D, and PB-AZ tests when $k = 2$ with $(\mu_0, \mu_1, \mu_2) = c(1, 1.3, 1.6)$: (a) For moderate heterogeneous variances and small sample sizes, (b) For moderate heterogeneous variances and moderate sample sizes, (c) For high heterogeneous variances and small sample sizes, (d) For high heterogeneous variances and moderate sample sizes}
    \label{fig:1}
\end{figure}

.
\begin{figure}[!ht]
    \centering
    \includegraphics[width=1\textwidth]{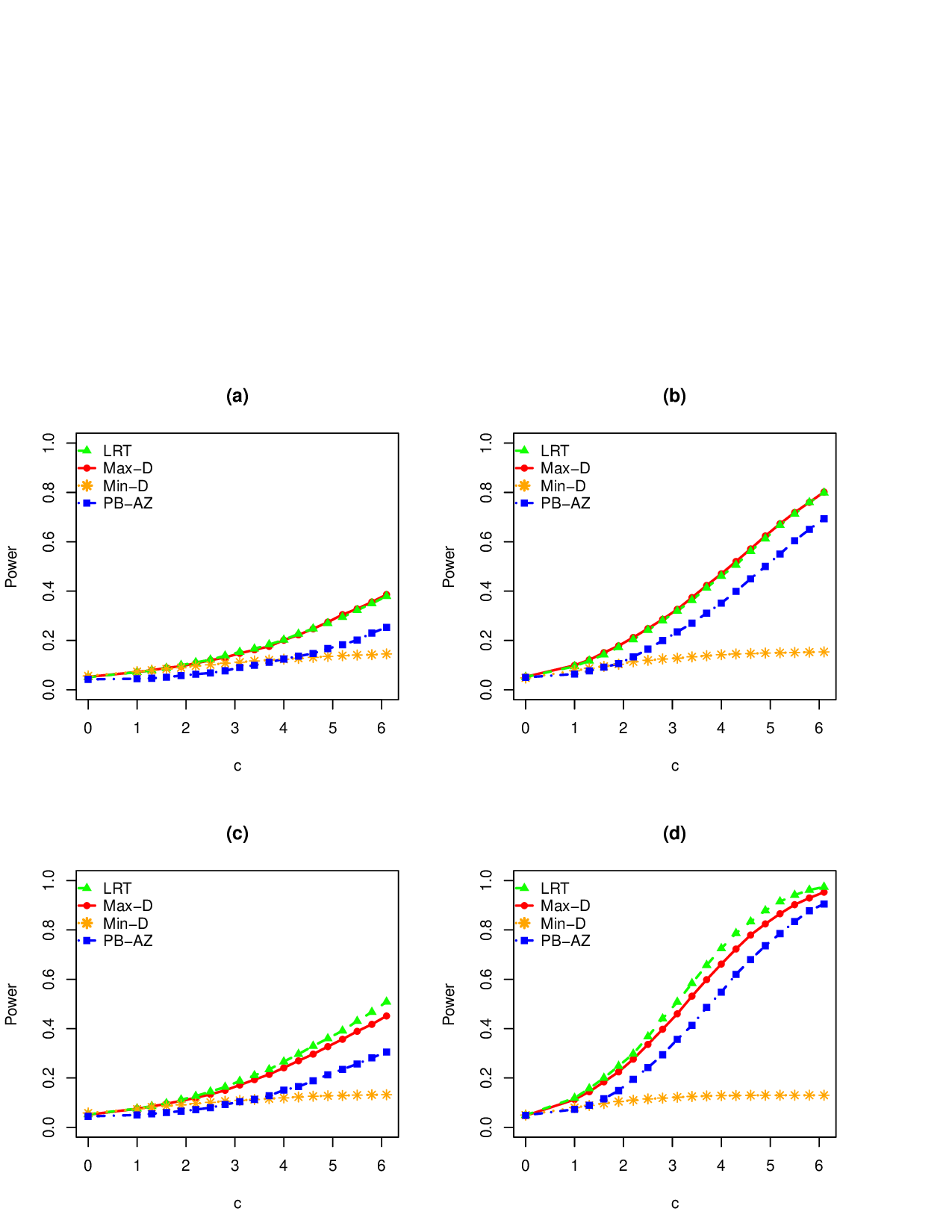}
    \caption{Power curves of the LRT, Max-D, Min-D, and PB-AZ tests when $k = 2$ with $(\mu_0, \mu_1, \mu_2) = c(1, 1.3, 1)$: (a) For moderate heterogeneous variances and small sample sizes, (b) For moderate heterogeneous variances and moderate sample sizes, (c) For high heterogeneous variances and small sample sizes, (d) For high heterogeneous variances and moderate sample sizes}
    \label{fig:2}
\end{figure}

\begin{figure}[!ht]
    \centering
    \includegraphics[width=1\textwidth]{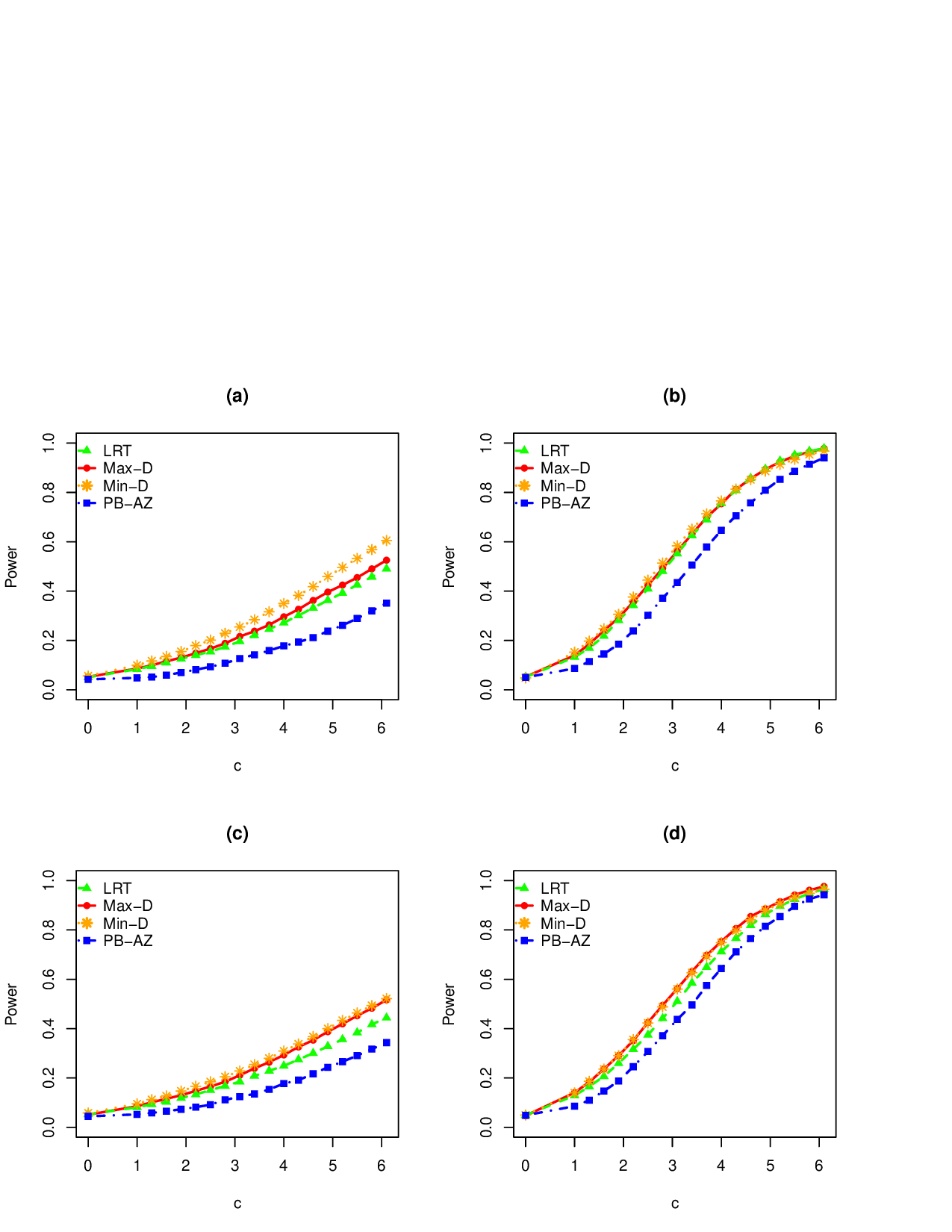}
    \caption{Power curves of the LRT, Max-D, Min-D, and PB-AZ tests when $k = 2$ with $(\mu_0, \mu_1, \mu_2) = c(1, 1.3, 1.3)$: (a) For moderate heterogeneous variances and small sample sizes, (b) For moderate heterogeneous variances and moderate sample sizes, (c) For high heterogeneous variances and small sample sizes, (d) For high heterogeneous variances and moderate sample sizes}
    \label{fig:3}
\end{figure}

\begin{figure}[!ht]
    \centering
    \includegraphics[width=0.94\textwidth]{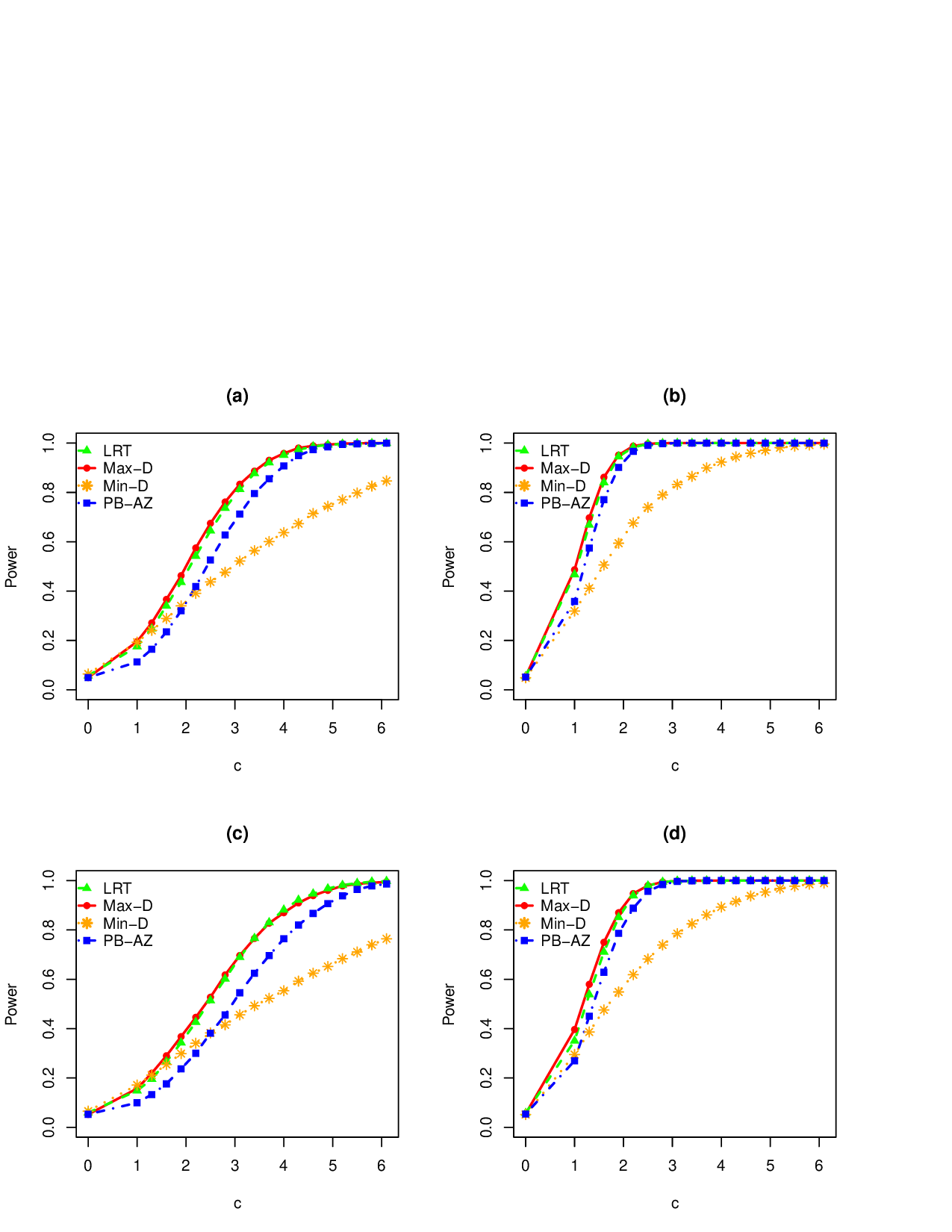}
    \caption{Power curves of the LRT, Max-D, Min-D, and PB-AZ tests when $k = 3$ with $(\mu_0, \mu_1, \mu_2, \mu_3) = c(1, 1.6, 1.9, 1.3)$: (a) For moderate heterogeneous variances and small sample sizes, (b) For moderate heterogeneous variances and moderate sample sizes, (c) For high heterogeneous variances and small sample sizes, (d) For high heterogeneous variances and moderate sample sizes}
    \label{fig:4}
\end{figure}

\begin{figure}[!ht]
    \centering
    \includegraphics[width=0.94\textwidth]{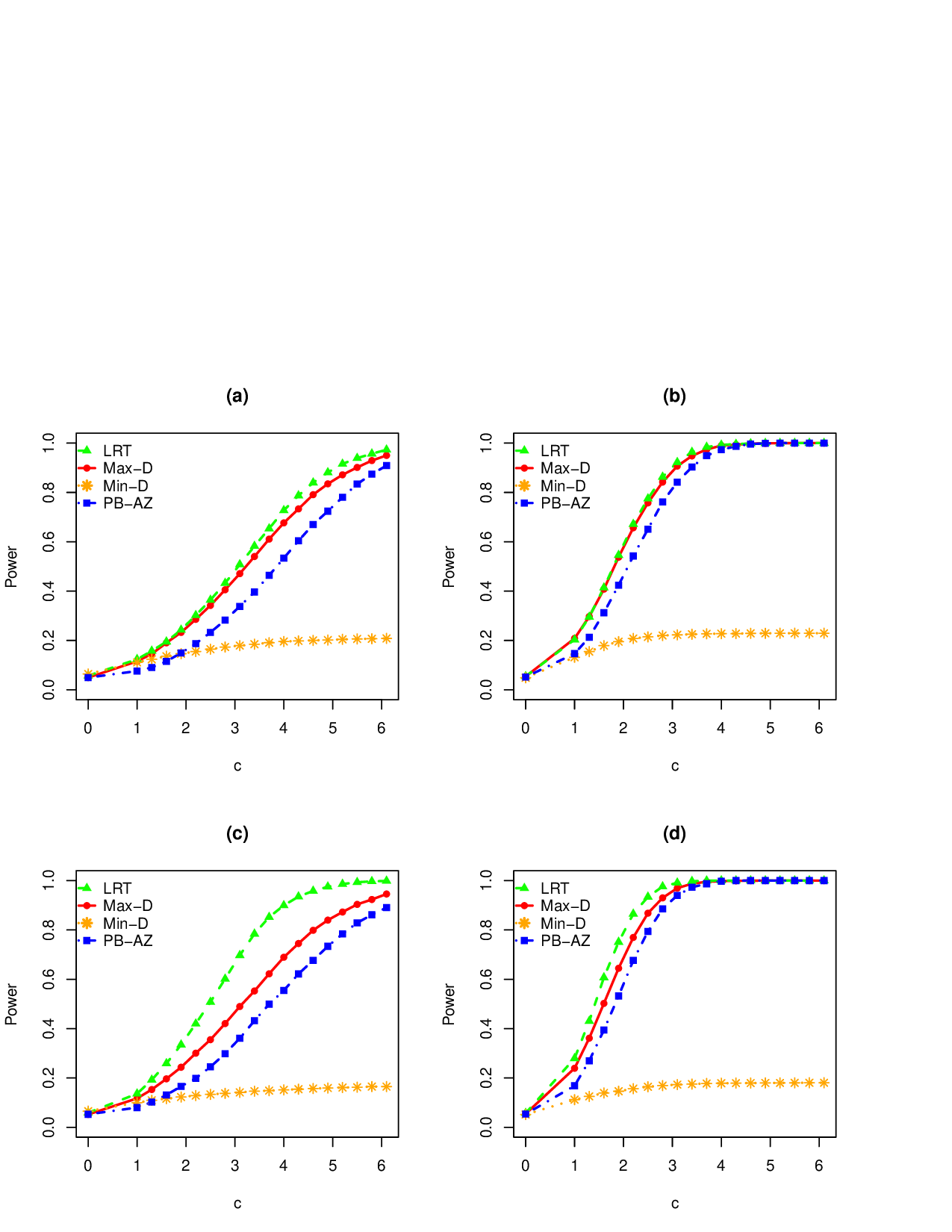}
    \caption{Power curves of the LRT, Max-D, Min-D, and PB-AZ tests when $k = 3$ with $(\mu_0, \mu_1, \mu_2, \mu_3) = c(1, 1.6, 1.3, 1)$: (a) For moderate heterogeneous variances and small sample sizes, (b) For moderate heterogeneous variances and moderate sample sizes, (c) For high heterogeneous variances and small sample sizes, (d) For high heterogeneous variances and moderate sample sizes}
    \label{fig:5}
\end{figure}

\begin{figure}[!ht]
    \centering
    \includegraphics[width=0.94\textwidth]{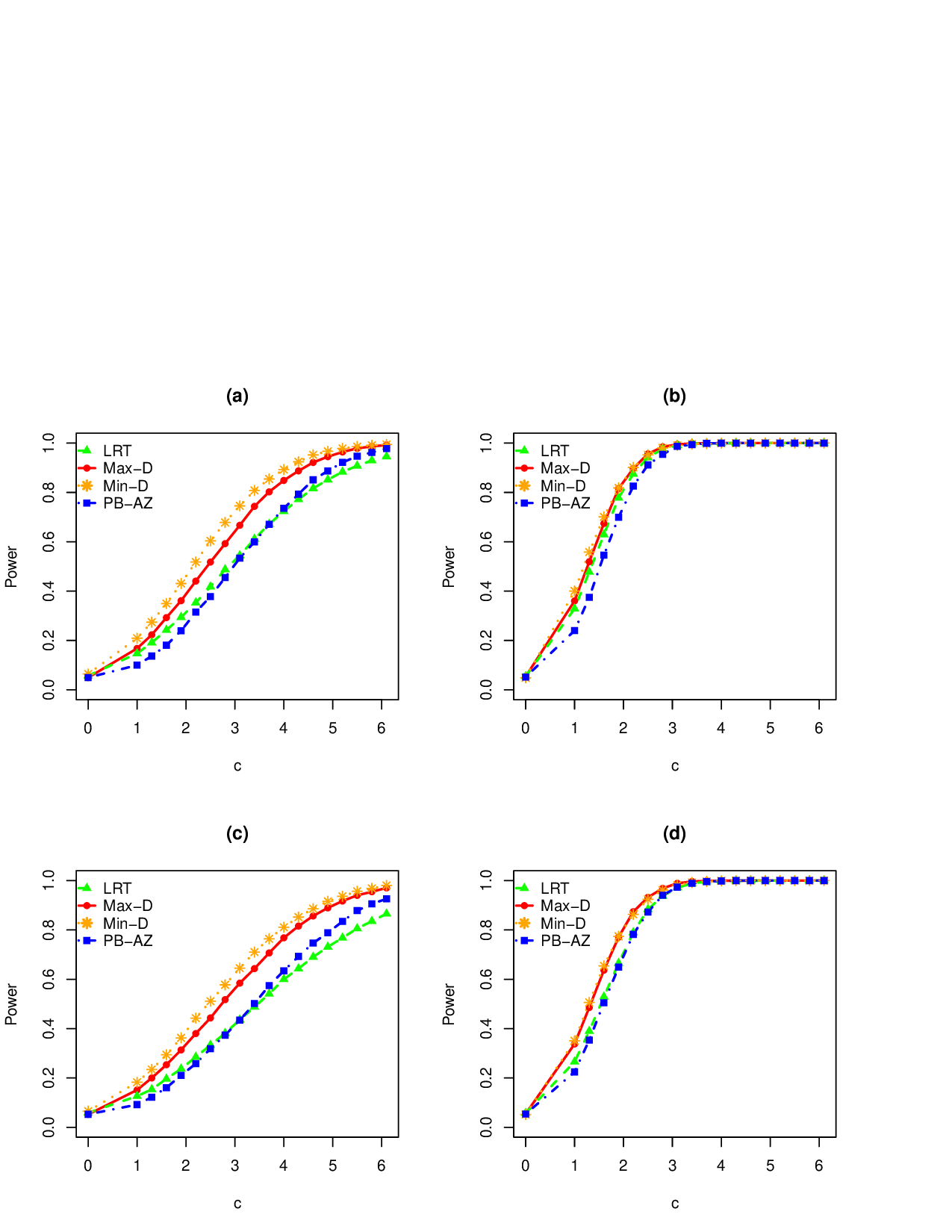}
    \caption{Power curves of the LRT, Max-D, Min-D, and PB-AZ tests when $k = 3$ with $(\mu_0, \mu_1, \mu_2, \mu_3) = c(1, 1.6, 1.6, 1.6)$: (a) For moderate heterogeneous variances and small sample sizes, (b) For moderate heterogeneous variances and moderate sample sizes, (c) For high heterogeneous variances and small sample sizes, (d) For high heterogeneous variances and moderate sample sizes}
    \label{fig:6}
\end{figure}

\begin{enumerate}
\item [(i)] From Tables \ref{Table 1} - \ref{Table 3}, it can be observed that the empirical sizes of PB-based LRT, Max-D, and Min-D are very close to the nominal size $0.05$ in almost all cases. Only Min-D is slightly liberal for very small samples. On the other hand, the PB-AZ test is conservative for small samples and preserves nominal size for moderate sample size cases. 

\item[(ii)] In Fig. \ref{fig:1}, we have taken the choice of mean effects to be strictly monotone. It is observed that Max-D has the best performance throughout, except when the sample sizes are small and variance heterogeneity is high. In these cases, the power values of LRT are very close to those of Max-D. In the latter case, Min-D has higher powers for initial values of $c$, but as $c$ increases, Max-D beats Min-D. In all cases of parameter configurations, LRT and Max-D have superior power than the PB-AZ test. For moderate sample sizes, Min-D test has higher powers than the PB-AZ test for small values of $c$, but as $c$ increases, the PB-AZ test beats Min-D.

\item[(iii)] Fig. \ref{fig:2} deals with the case when one of the treatment means coincides with the control mean. Here, LRT performs the best, with Max-D being very close. Note that Min-D does not have good power values in this case. This is due to the fact that it may take values very close to zero. As $c$ increases, the powers of LRT and Max-D approach unity. When variances are moderately heterogeneous, Max-D and LRT have similar power values, whereas, for highly heterogeneous variances, LRT performs better than Max-D. It is further noted for this case that when $k$ increases, then the performance of LRT improves over Max-D and Min-D, but Max-D remains quite superior to Min-D (see Fig. \ref{fig:5}). For each of the parameter configurations (Fig. \ref{fig:2} and Fig. \ref{fig:5}) Max-D and LRT exhibit greater powers than the PB-AZ test.
    
\item[(iv)] In Fig. \ref{fig:3}, all treatment means are taken to be equal but more than the control. In this situation, Min-D performs better than Max-D and LRT in all cases, except when sample sizes are moderate and variances are highly heterogeneous. In the last case, Min-D and Max-D have almost identical power curves, and LRT has lower powers for some values of $c$. Here our proposed tests beat the PB-AZ in each of the parameter configurations. The power behavior of our proposed tests remains similar in this case when the value of $k$ increases (see Fig. \ref{fig:6}). For higher $k$, PB-AZ shows higher power than Min-D as $c$ increases except in the moderate sample case.

\item[(v)] Fig. \ref{fig:4} considers the case when all treatment means are distinct and strictly larger than the control. Here, Max-D and LRT have almost identical power curves and are quite better than Min-D and PB-AZ in all cases. In all cases, the Min-D test has higher powers than the PB-AZ test for initial values of $c$, but as $c$ increases, the PB-AZ test beats Min-D.

\item[(vi)] It appears that the power of all tests increases as the value of $c$ increases. Power is dependent on ${\delta\mu_i}'s$, where $\delta\mu_i = \mu_{i} - \mu_0$, $i=1,2,\ldots,k$. The power of the tests varies for various combinations of $(\delta\mu_1,\ldots, \delta\mu_{k})$ when sample sizes and group variances are kept fixed. An increase in the variance of any group leads to a drop in power, assuming all other parameters remain constant. Powers also increase as the sample sizes increase.

\item[(vii)] It may be concluded that all three tests achieve the intended size, except that Min-D is liberal and PB-AZ is conservative for very small samples. Furthermore, Max-D and LRT demonstrate higher power than PB-AZ across all parameter configurations, and they outperform Min-D in every scenario except when all treatment means coincide but exceed the mean of the control group. Implementation of Max-D and Min-D is quite easy for any number of treatments, whereas LRT is computationally intensive.

\end{enumerate}

 \begin{remark}
    The packages TreeMaxD, TreeMinD, and TreeLRT have been developed to conduct Max-D, Min-D, and LRT tests, respectively, for any number of groups. The functions that relate to this are TreeMaxD, TreeMinD, and TreeLRT. These files are uploaded to the first author's GitHub account, `SubhaHalder-spec'. To perform the tests, one needs to follow the instructions given below.
\end{remark}

\begin{verbatim}
install.packages("devtools")
library(devtools)
install_github("SubhaHalder-spec/TreeMaxD", force = TRUE)
library(TreeMaxD)
install_github("SubhaHalder-spec/TreeMinD", force = TRUE)
library(TreeMinD)
install_github("SubhaHalder-spec/TreeLRT", force = TRUE)
library(TreeLRT)
TreeMaxD(list(data), alpha)
TreeMinD(list(data), alpha)
TreeLRT(list(data), alpha),
\end{verbatim}
where data $= (X_{0,n_0}, X_{1,n_1},\ldots, X_{k,n_k}),$ and alpha $(\alpha)$ is the significance level.

\section{Robustness of Tests}\label{Robustness}
In this section, we analyze the robustness of the three tests suggested in Sections \ref{LRT} and \ref{Bootstrap Max-D Min-D}. We examine the impact of sampling from non-normal distributions on the size and power values of the LRT, Max-D, and Min-D tests. Five non-normal distributions are considered - skew-normal distribution (with location parameter 1, scale parameter 1, and shape parameter 1), t distribution (with degrees of freedom 3), Laplace distribution (with location parameter 0 and scale parameter 1), a mixture of N(1,1) and N(5,16) with mixing parameter $(0.5,0.5)$ and exponential distribution (with rate parameter 2). For computing empirical size and power values, the configurations for all non-normal distributions are chosen to be similar to those for the normal considered in Section \ref{Size and Power}. The algorithms given in earlier sections are used with some modifications. In Step 2 of Algorithms \ref{Algorithm 2}, we generate samples from given non-normal distributions and standardize these as $W_{ij}$. Now take $X_{ij}=\mu_i + \sigma_i W_{ij}$. These $X_{ij}$'s are then used to obtain the size and power values of LRT, Max-D, and Min-D tests by following the remaining steps of Algorithms \ref{Algorithm 2} and \ref{Algorithm 3}.

As shown in Tables \ref{Table 4} - \ref{Table 6}, size calculations for all the tests are based on extremely unbalanced and large group variances $(\underline{\boldsymbol{\sigma}}^2=(8^2,10^2,12^2), (11^2,8^2,5^2))$ and various sample sizes, including very small, medium, unbalanced, large, and equal sample sizes. The power plots of the tests are presented for the distributions (normal, skew-normal, t, and Laplace) for which type I errors are controlled. The choice of mean vector is taken as $(\underline{\boldsymbol\mu}=c(1, 1.3, 1.6))$ with two specific cases of other parameters as (i) Moderate heterogeneous variances and very small sample size $(\underline{\boldsymbol{\sigma}}^2=(2,3,4) \;\text{and}\; N_1(5,5,5))$, (ii) Moderate heterogeneous variances and moderate sample sizes $(\underline{\boldsymbol{\sigma}}^2=(2,3,4)\;\text{and}\; N_2(20, 10, 25))$.  

We use the symbols N, Sk-N, T, L, Mix-N, and Ex to represent normal, skew-normal, students’ t, Laplace or double exponential, mixture-normal, and exponential distributions, respectively, in Tables \ref{Table 4} - \ref{Table 6} and Fig. \ref{fig:7} - \ref{fig:9}. The following conclusions can be drawn from these tables and graphs regarding the size and power performance of LRT, Max-D, and Min-D for non-normal distributions.

\begin{table}[!ht]
    \caption{Estimated sizes of LRT for different distributions}
    \label{Table 4}
    \begin{tiny}
    \vspace{3mm}
    \centering \setlength\tabcolsep{0.001pt}
        \begin{tabular}{c c  | c  }
        
     & \multicolumn{1}{c|}{$\boldsymbol{\underline{\sigma}}=(8,10,12)$} & \multicolumn{1}{c}{$\boldsymbol{\underline{\sigma}}=(11,8,5)$} \\
    \hline
    {$(n_1,n_2,n_3)$}&{\hspace{0.2cm}N\hspace{0.6cm}Sk-N\hspace{0.6cm}T\hspace{0.8cm}L\hspace{0.6cm}Mix-N\hspace{0.6cm}Ex \hspace{0.2cm}}\hspace{0.1cm}&{\hspace{0.6cm}N\hspace{0.6cm}Sk-N\hspace{0.6cm}T\hspace{0.8cm}L\hspace{0.6cm}Mix-N\hspace{0.6cm}Ex \hspace{0.2cm}}\hspace{0.1cm} \\  \hline
     (5,5,5) & 0.0499 \;\; 0.0478 \;\; 0.0394 \;\; 0.0364 \;\; 0.0165 \;\; 0.0275 & \;\; 0.0503 \;\; 0.0514 \;\; 0.0413  \;\; 0.0373 \;\; 0.1251 \;\; 0.0983 \\
  (10,10,15) & 0.0513 \;\; 0.0493 \;\; 0.0441 \;\; 0.0489 \;\; 0.0403 \;\; 0.0394 & \;\; 0.0540 \;\; 0.0548 \;\; 0.0444  \;\; 0.0443 \;\; 0.1122 \;\; 0.1099 \\
   (5,15,25) & 0.0555 \;\; 0.0528 \;\; 0.0458 \;\; 0.0520 \;\; 0.1343 \;\; 0.0977 & \;\; 0.0532 \;\; 0.0533 \;\; 0.0450  \;\; 0.0467 \;\; 0.1977 \;\; 0.1522 \\
  (30,15,10) & 0.0471 \;\; 0.0510 \;\; 0.0434 \;\; 0.0454 \;\; 0.0177 \;\; 0.0207 & \;\; 0.0469 \;\; 0.0529 \;\; 0.0442  \;\; 0.0507 \;\; 0.0458 \;\; 0.0499 \\
  (10,50,40) & 0.0497 \;\; 0.0567 \;\; 0.0466 \;\; 0.0486 \;\; 0.1017 \;\; 0.0942 & \;\; 0.0519 \;\; 0.0577 \;\; 0.0474  \;\; 0.0485 \;\; 0.1179 \;\; 0.1157 \\
    (5,30,5) & 0.0538 \;\; 0.0558 \;\; 0.0459 \;\; 0.0413 \;\; 0.1697 \;\; 0.1181 & \;\; 0.0539 \;\; 0.0598 \;\; 0.0467  \;\; 0.0409 \;\; 0.1885 \;\; 0.1309 \\
     (8,7,6) & 0.0513 \;\; 0.0506 \;\; 0.0399 \;\; 0.0438 \;\; 0.0210 \;\; 0.0298 & \;\; 0.0544 \;\; 0.0514 \;\; 0.0403  \;\; 0.0432 \;\; 0.0855 \;\; 0.0845 \\
   (30,45,8) & 0.0492 \;\; 0.0499 \;\; 0.0437 \;\; 0.0458 \;\; 0.0366 \;\; 0.0337 & \;\; 0.0527 \;\; 0.0535 \;\; 0.0447  \;\; 0.0463 \;\; 0.0740 \;\; 0.0714 \\
  (40,55,70) & 0.0515 \;\; 0.0534 \;\; 0.0500 \;\; 0.0482 \;\; 0.0527 \;\; 0.0545 & \;\; 0.0505 \;\; 0.0546 \;\; 0.0510  \;\; 0.0457 \;\; 0.0703 \;\; 0.0807 \\
  (80,65,55) & 0.0498 \;\; 0.0495 \;\; 0.0470 \;\; 0.0518 \;\; 0.0416 \;\; 0.0390 & \;\; 0.0513 \;\; 0.0504 \;\; 0.0485  \;\; 0.0494 \;\; 0.0597 \;\; 0.0639 \\
  (50,50,50) & 0.0488 \;\; 0.0527 \;\; 0.0478 \;\; 0.0502 \;\; 0.0451 \;\; 0.0446 & \;\; 0.0507 \;\; 0.0521 \;\; 0.0462  \;\; 0.0490 \;\; 0.0646 \;\; 0.0713 \\
  (10,10,80) & 0.0508 \;\; 0.0558 \;\; 0.0446 \;\; 0.0449 \;\; 0.1115 \;\; 0.1005 & \;\; 0.0516 \;\; 0.0570 \;\; 0.0459  \;\; 0.0437 \;\; 0.1499 \;\; 0.1485 \\
 (80,95,100) & 0.0472 \;\; 0.0549 \;\; 0.0487 \;\; 0.0490 \;\; 0.0500 \;\; 0.0444 & \;\; 0.0477 \;\; 0.0549 \;\; 0.0460  \;\; 0.0494 \;\; 0.0641 \;\; 0.0615 \\
(100,15,100) & 0.0459 \;\; 0.0469 \;\; 0.0483 \;\; 0.0497 \;\; 0.0297 \;\; 0.0286 & \;\; 0.0467 \;\; 0.0506 \;\; 0.0455  \;\; 0.0490 \;\; 0.0541 \;\; 0.0609 \\
  (10,11,12) & 0.0522 \;\; 0.0531 \;\; 0.0461 \;\; 0.0470 \;\; 0.0381 \;\; 0.0389 & \;\; 0.0557 \;\; 0.0526 \;\; 0.0444  \;\; 0.0472 \;\; 0.0991 \;\; 0.1031 \\
  (25,25,25) & 0.0494 \;\; 0.0528 \;\; 0.0456 \;\; 0.0485 \;\; 0.0414 \;\; 0.0412 & \;\; 0.0509 \;\; 0.0539 \;\; 0.0468  \;\; 0.0480 \;\; 0.0727 \;\; 0.0757 \\

            \hline
        \end{tabular}
    \end{tiny}
\end{table}

\begin{table}[!ht]
    \caption{Estimated sizes of Max-D for different distributions}
    \label{Table 5}
    \begin{tiny}
    \vspace{3mm}
    \centering \setlength\tabcolsep{0.001pt}
        \begin{tabular}{c c  | c  }
        
     & \multicolumn{1}{c|}{$\boldsymbol{\underline{\sigma}}=(8,10,12)$} & \multicolumn{1}{c}{$\boldsymbol{\underline{\sigma}}=(11,8,5)$} \\
    \hline
    {$(n_1,n_2,n_3)$}&{\hspace{0.2cm}N\hspace{0.6cm}Sk-N\hspace{0.6cm}T\hspace{0.8cm}L\hspace{0.6cm}Mix-N\hspace{0.6cm}Ex \hspace{0.2cm}}\hspace{0.1cm}&{\hspace{0.6cm}N\hspace{0.6cm}Sk-N\hspace{0.6cm}T\hspace{0.8cm}L\hspace{0.6cm}Mix-N\hspace{0.6cm}Ex \hspace{0.2cm}}\hspace{0.1cm} \\  \hline
     (5,5,5) & 0.0485 \;\; 0.0469 \;\; 0.0388 \;\; 0.0388 \;\; 0.0890 \;\; 0.0198 & \;\; 0.0507 \;\; 0.0498 \;\; 0.0400  \;\; 0.0419 \;\; 0.1106 \;\; 0.0870 \\
  (10,10,15) & 0.0482 \;\; 0.0505 \;\; 0.0430 \;\; 0.0475 \;\; 0.0298 \;\; 0.0267 & \;\; 0.0500 \;\; 0.0541 \;\; 0.0447  \;\; 0.0482 \;\; 0.0911 \;\; 0.0916 \\
   (5,15,25) & 0.0543 \;\; 0.0522 \;\; 0.0441 \;\; 0.0494 \;\; 0.1179 \;\; 0.0836 & \;\; 0.0529 \;\; 0.0528 \;\; 0.0448  \;\; 0.0451 \;\; 0.1928 \;\; 0.1507 \\
  (30,15,10) & 0.0495 \;\; 0.0501 \;\; 0.0458 \;\; 0.0480 \;\; 0.0159 \;\; 0.0167 & \;\; 0.0497 \;\; 0.0536 \;\; 0.0459  \;\; 0.0517 \;\; 0.0401 \;\; 0.0421 \\
  (10,50,40) & 0.0485 \;\; 0.0540 \;\; 0.0495 \;\; 0.0492 \;\; 0.0982 \;\; 0.0885 & \;\; 0.0461 \;\; 0.0546 \;\; 0.0480  \;\; 0.0474 \;\; 0.1276 \;\; 0.1226 \\
    (5,30,5) & 0.0516 \;\; 0.0538 \;\; 0.0425 \;\; 0.0418 \;\; 0.0974 \;\; 0.0719 & \;\; 0.0519 \;\; 0.0572 \;\; 0.0426  \;\; 0.0447 \;\; 0.1725 \;\; 0.1299 \\
     (8,7,6) & 0.0533 \;\; 0.0509 \;\; 0.0420 \;\; 0.0445 \;\; 0.0128 \;\; 0.0194 & \;\; 0.0520 \;\; 0.0538 \;\; 0.0417  \;\; 0.0451 \;\; 0.0788 \;\; 0.0760 \\
   (30,45,8) & 0.0512 \;\; 0.0507 \;\; 0.0446 \;\; 0.0464 \;\; 0.0272 \;\; 0.0234 & \;\; 0.0512 \;\; 0.0546 \;\; 0.0472  \;\; 0.0504 \;\; 0.0543 \;\; 0.0580 \\
  (40,55,70) & 0.0516 \;\; 0.0516 \;\; 0.0521 \;\; 0.0478 \;\; 0.0497 \;\; 0.0472 & \;\; 0.0529 \;\; 0.0539 \;\; 0.0534  \;\; 0.0499 \;\; 0.0704 \;\; 0.0800 \\
  (80,65,55) & 0.0514 \;\; 0.0465 \;\; 0.0480 \;\; 0.0514 \;\; 0.0399 \;\; 0.0352 & \;\; 0.0488 \;\; 0.0488 \;\; 0.0514  \;\; 0.0501 \;\; 0.0607 \;\; 0.0617 \\
  (50,50,50) & 0.0490 \;\; 0.0504 \;\; 0.0473 \;\; 0.0505 \;\; 0.0406 \;\; 0.0362 & \;\; 0.0495 \;\; 0.0547 \;\; 0.0467  \;\; 0.0518 \;\; 0.0650 \;\; 0.0644 \\
  (10,10,80) & 0.0480 \;\; 0.0527 \;\; 0.0444 \;\; 0.0452 \;\; 0.0807 \;\; 0.0701 & \;\; 0.0476 \;\; 0.0550 \;\; 0.0454  \;\; 0.0466 \;\; 0.1149 \;\; 0.1117 \\
 (80,95,100) & 0.0466 \;\; 0.0553 \;\; 0.0499 \;\; 0.0471 \;\; 0.0493 \;\; 0.0407 & \;\; 0.0483 \;\; 0.0542 \;\; 0.0491  \;\; 0.0479 \;\; 0.0672 \;\; 0.0600 \\
(100,15,100) & 0.0472 \;\; 0.0484 \;\; 0.0456 \;\; 0.0492 \;\; 0.0266 \;\; 0.0251 & \;\; 0.0498 \;\; 0.0540 \;\; 0.0446  \;\; 0.0524 \;\; 0.0442 \;\; 0.0460 \\
  (10,11,12) & 0.0524 \;\; 0.0516 \;\; 0.0426 \;\; 0.0479 \;\; 0.0284 \;\; 0.0297 & \;\; 0.0536 \;\; 0.0517 \;\; 0.0445  \;\; 0.0485 \;\; 0.0914 \;\; 0.0950 \\
  (25,25,25) & 0.0481 \;\; 0.0546 \;\; 0.0480 \;\; 0.0470 \;\; 0.0372 \;\; 0.0326 & \;\; 0.0487 \;\; 0.0566 \;\; 0.0487  \;\; 0.0469 \;\; 0.0734 \;\; 0.0725 \\

            \hline
        \end{tabular}
    \end{tiny}
\end{table}

\begin{table}[!ht]
    \caption{Estimated sizes of Min-D for different distributions}
    \label{Table 6}
    \begin{tiny}
    \vspace{3mm}
    \centering \setlength\tabcolsep{0.001pt}
        \begin{tabular}{c c  | c  }
        
     & \multicolumn{1}{c|}{$\boldsymbol{\underline{\sigma}}=(8,10,12)$} & \multicolumn{1}{c}{$\boldsymbol{\underline{\sigma}}=(11,8,5)$} \\
    \hline
    {$(n_1,n_2,n_3)$}&{\hspace{0.2cm}N\hspace{0.6cm}Sk-N\hspace{0.6cm}T\hspace{0.8cm}L\hspace{0.6cm}Mix-N\hspace{0.6cm}Ex \hspace{0.2cm}}\hspace{0.1cm}&{\hspace{0.6cm}N\hspace{0.6cm}Sk-N\hspace{0.6cm}T\hspace{0.8cm}L\hspace{0.6cm}Mix-N\hspace{0.6cm}Ex \hspace{0.2cm}}\hspace{0.1cm} \\  \hline
     (5,5,5) & 0.0504 \;\; 0.0511 \;\; 0.0541 \;\; 0.0528 \;\; 0.0477 \;\; 0.0609 & \;\; 0.0586 \;\; 0.0620 \;\; 0.0568  \;\; 0.0579 \;\; 0.1621 \;\; 0.1414 \\
  (10,10,15) & 0.0487 \;\; 0.0562 \;\; 0.0516 \;\; 0.0518 \;\; 0.0554 \;\; 0.0690 & \;\; 0.0517 \;\; 0.0616 \;\; 0.0538  \;\; 0.0541 \;\; 0.1023 \;\; 0.1138 \\
   (5,15,25) & 0.0652 \;\; 0.0628 \;\; 0.0542 \;\; 0.0584 \;\; 0.1524 \;\; 0.1307 & \;\; 0.0670 \;\; 0.0638 \;\; 0.0550  \;\; 0.0557 \;\; 0.2000 \;\; 0.1689 \\
  (30,15,10) & 0.0513 \;\; 0.0498 \;\; 0.0508 \;\; 0.0532 \;\; 0.0293 \;\; 0.0414 & \;\; 0.0518 \;\; 0.0544 \;\; 0.0506  \;\; 0.0520 \;\; 0.0560 \;\; 0.0717 \\
  (10,50,40) & 0.0527 \;\; 0.0597 \;\; 0.0534 \;\; 0.0532 \;\; 0.1137 \;\; 0.1119 & \;\; 0.0508 \;\; 0.0593 \;\; 0.0528  \;\; 0.0511 \;\; 0.1360 \;\; 0.1353 \\
    (5,30,5) & 0.0539 \;\; 0.0589 \;\; 0.0549 \;\; 0.0604 \;\; 0.0818 \;\; 0.0953 & \;\; 0.0603 \;\; 0.0706 \;\; 0.0537  \;\; 0.0621 \;\; 0.1957 \;\; 0.1663 \\
     (8,7,6) & 0.0523 \;\; 0.0526 \;\; 0.0517 \;\; 0.0533 \;\; 0.0405 \;\; 0.0542 & \;\; 0.0557 \;\; 0.0615 \;\; 0.0529  \;\; 0.0505 \;\; 0.1105 \;\; 0.1158 \\
   (30,45,8) & 0.0471 \;\; 0.0503 \;\; 0.0539 \;\; 0.0554 \;\; 0.0356 \;\; 0.0479 & \;\; 0.0506 \;\; 0.0528 \;\; 0.0538  \;\; 0.0555 \;\; 0.0660 \;\; 0.0817 \\
  (40,55,70) & 0.0516 \;\; 0.0547 \;\; 0.0585 \;\; 0.0513 \;\; 0.0600 \;\; 0.0635 & \;\; 0.0527 \;\; 0.0531 \;\; 0.0570  \;\; 0.0521 \;\; 0.0784 \;\; 0.0847 \\
  (80,65,55) & 0.0506 \;\; 0.0493 \;\; 0.0529 \;\; 0.0507 \;\; 0.0494 \;\; 0.0488 & \;\; 0.0472 \;\; 0.0511 \;\; 0.0528  \;\; 0.0504 \;\; 0.0640 \;\; 0.0678 \\
  (50,50,50) & 0.0524 \;\; 0.0504 \;\; 0.0482 \;\; 0.0526 \;\; 0.0539 \;\; 0.0583 & \;\; 0.0516 \;\; 0.0552 \;\; 0.0455  \;\; 0.0526 \;\; 0.0709 \;\; 0.0762 \\
  (10,10,80) & 0.0465 \;\; 0.0564 \;\; 0.0556 \;\; 0.0537 \;\; 0.0823 \;\; 0.0918 & \;\; 0.0498 \;\; 0.0584 \;\; 0.0529  \;\; 0.0534 \;\; 0.1106 \;\; 0.1169 \\
 (80,95,100) & 0.0481 \;\; 0.0519 \;\; 0.0531 \;\; 0.0498 \;\; 0.0574 \;\; 0.0555 & \;\; 0.0488 \;\; 0.0548 \;\; 0.0521  \;\; 0.0492 \;\; 0.0700 \;\; 0.0662 \\
(100,15,100) & 0.0488 \;\; 0.0510 \;\; 0.0561 \;\; 0.0513 \;\; 0.0382 \;\; 0.0440 & \;\; 0.0455 \;\; 0.0533 \;\; 0.0537  \;\; 0.0492 \;\; 0.0472 \;\; 0.0586 \\
  (10,11,12) & 0.0515 \;\; 0.0560 \;\; 0.0550 \;\; 0.0524 \;\; 0.0574 \;\; 0.0668 & \;\; 0.0537 \;\; 0.0620 \;\; 0.0561  \;\; 0.0534 \;\; 0.1058 \;\; 0.1184 \\
  (25,25,25) & 0.0526 \;\; 0.0545 \;\; 0.0527 \;\; 0.0484 \;\; 0.0545 \;\; 0.0616 & \;\; 0.0536 \;\; 0.0581 \;\; 0.0514  \;\; 0.0500 \;\; 0.0833 \;\; 0.0891 \\

            \hline
        \end{tabular}
    \end{tiny}
\end{table}

\begin{figure}[!ht]
    \centering
    \includegraphics[width=1\textwidth]{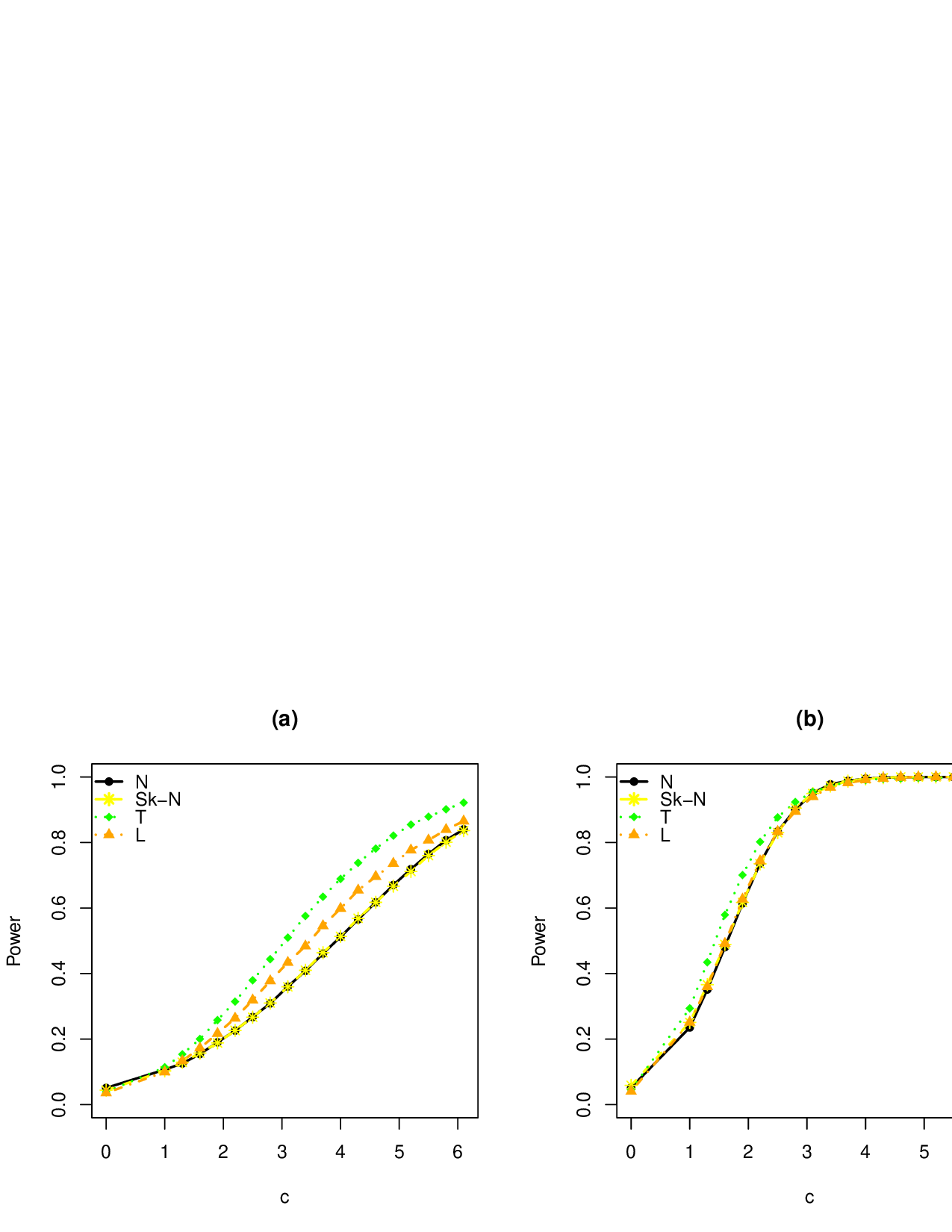}
    \caption{Power plots of the LRT when $k = 2$ with $(\mu_0, \mu_1, \mu_2) = c(1, 1.3, 1.6)$: (a) For high heterogeneous variances and very small sample sizes, (b) For high heterogeneous variances and moderate sample sizes}
    \label{fig:7}
\end{figure}

\begin{figure}[!ht]
    \centering
    \includegraphics[width=1\textwidth]{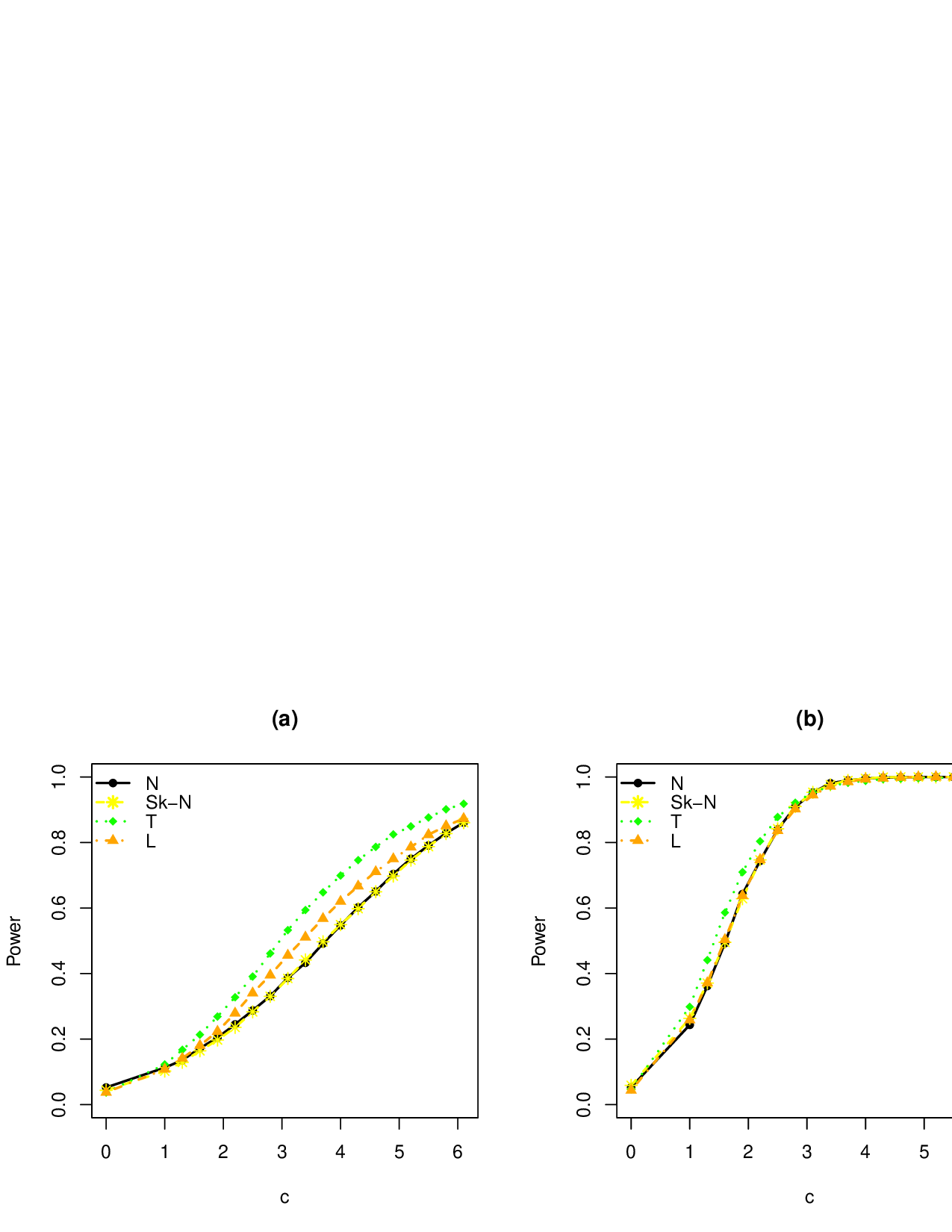}
    \caption{Power plots of the Max-D for different non-normal distributions when $k = 2$ with $(\mu_0, \mu_1, \mu_2) = c(1, 1.3, 1.6)$: (a) For high heterogeneous variances and very small sample sizes, (b) For high heterogeneous variances and moderate sample sizes}
    \label{fig:8}
\end{figure}

\begin{figure}[!ht]
    \centering
    \includegraphics[width=1\textwidth]{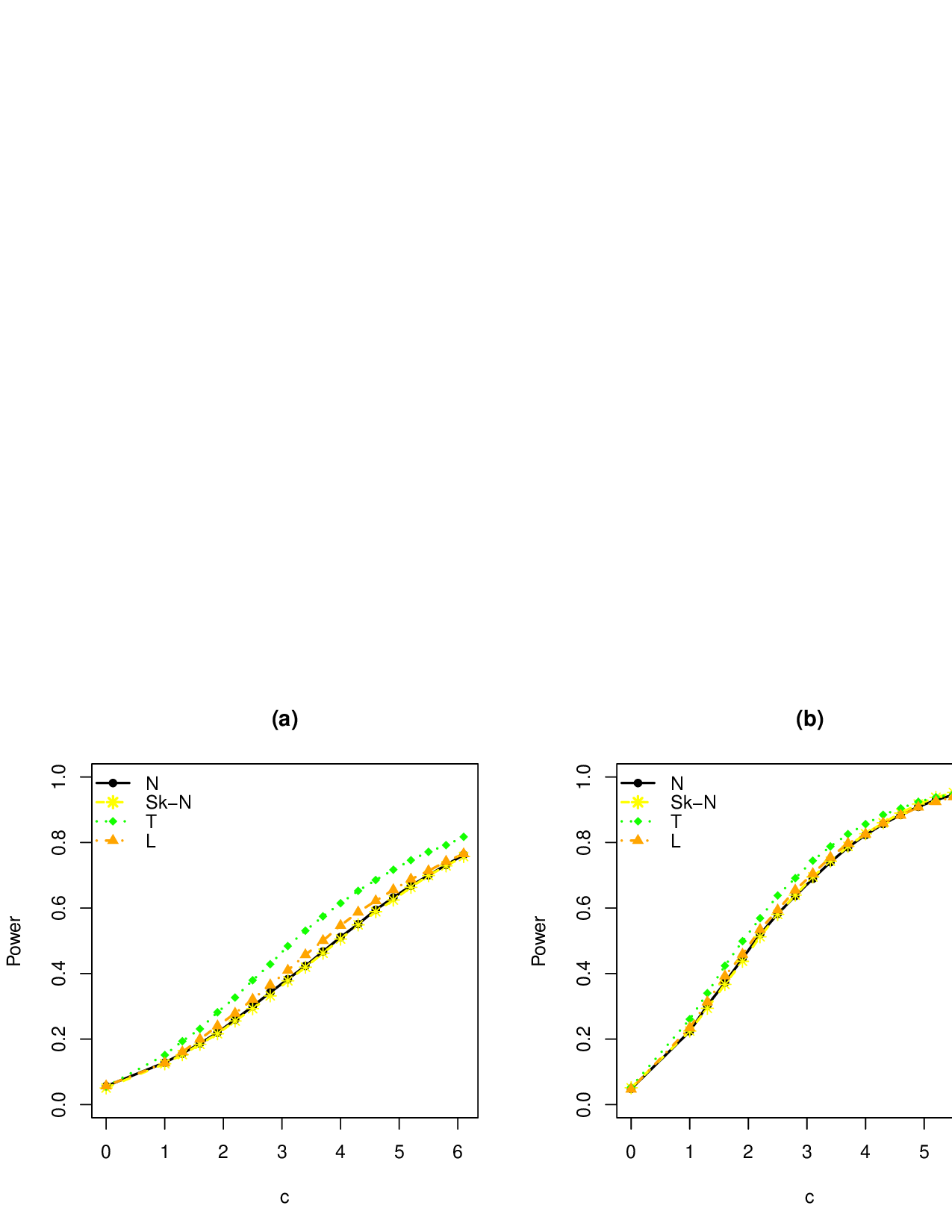}
    \caption{Power plots of the Min-D for different non-normal distributions when $k = 2$ with $(\mu_0, \mu_1, \mu_2) = c(1, 1.3, 1.6)$: (a) For high heterogeneous variances and very small sample sizes, (b) For high heterogeneous variances and moderate sample sizes}
    \label{fig:9}
\end{figure}

\begin{enumerate}
\item [(i)] In Tables \ref{Table 4}-\ref{Table 6}, for highly heterogeneous variances cases is considered. It can be seen that all three tests - the LRT, Max-D, and Min-D achieve a size very close to $0.05$ for skew-normal, Students' t, and Laplace distributions for all sample sizes except when all samples are very small. In this case, LRT and Max-D are slightly conservative for t and Laplace. However, Min-D also achieves the nominal size in this case. For the mixture-normal and exponential distributions, the tests become either very conservative or quite liberal except when all samples are large.

\item[(ii)] No loss of power is observed for any of the tests from Fig. \ref{fig:7} - \ref{fig:9}. In fact, for very small samples, we note small gains in power over the normal distribution. However, it may be due to sampling error for small samples. These trends are also observed for various other configurations of sample sizes, choices of mean effects, and variances. For the sake of brevity, we have not presented these results here.

\end{enumerate}

\section{An Application to an Actual data}\label{Application}
In this section, the proposed tests are implemented on a real dataset from a research study to examine the efficacy of various psychological treatments on the noise sensitivity of individuals who suffer from headaches (one can see the full data from {\small \burl{https://www.kaggle.com/datasets/utkarshx27/headache-sufferers-for-sensitivity-to-noise?resource=download}}). In this experiment, $98$ patients were initially evaluated for their subjective perception of noise loudness, specifically for the levels they found uncomfortable (U) and extremely uncomfortable (EU). Subsequently, the participants underwent relaxation training, during which they were exposed to the noise at the EU level. They were guided on proper breathing methods and instructed on using visual imagery to divert their attention from discomfort. There are four groups: one is Control, and the others are Treatment 1, Treatment 2, and Treatment 3. Here, the control group does not get any psychological treatments between the initial and final measure of noise level. In Treatment 1, patients were exposed again to their initial EU-level tone for the time they previously found tolerable. In Treatment 2, patients were given more time than in Treatment 1 to listen to the tone at their initial EU noise level as before. Treatment 3 patients were given the same time as Treatment 2 to listen to the tone at their initial EU noise level with relaxation techniques. 

Here, we considered the difference in final and initial measures in the noise level rated as Uncomfortable for all patients. The boxplots (Fig. \ref{fig:10}) 
seem to suggest that the mean of the control group is smaller than each treatment mean. Further, the data fits well to the normal distribution, as shown by the Q-Q plot (Fig. \ref{fig:11}). In addition, we performed the Kolmogorov–Smirnov test on this data and obtained p-values as $(0.2210, 0.3146, 0.2430, 0.8497).$  Bartlett's test for the homogeneity of variances yields the p-value 0.0005. So,  the group variances are not homogeneous.

\begin{table}[!ht]
    \caption{Summary Statistics}
    \label{Table 7}
    \vspace{2mm}
    \centering 
    \begin{tabular}{c  c  c  c}
        \hline
        Treatment category & Sample size & Sample mean & Sample variance \\
        \hline
        Control  & 23 & -0.4134783 & 1.416596 \\
        Treatment 1 & 25 & 0.2344000 & 3.422117 \\
        Treatment 2 & 22 & 1.0504545 & 7.297271 \\
        Treatment 3 & 28 & 0.9367857 & 1.935926 \\
        \hline
    \end{tabular}
\end{table}
\begin{table}[!ht]
    \caption{Calculated Size, Power, Test Statistics and Critical Values}
    \label{Table 8}
    \vspace{2mm}
    \centering 
    \begin{tabular}{ c  c  c  c  c }
        \hline
         Tests & (Size, Power)  & Test statistic  & Critical value & Decision   \\
        \hline
         LRT & (0.0556, 0.9655) & 0.0006892 & 0.0440946  & Rejected \\
         Max-D & (0.0563, 0.9678) & 3.7344682 & 2.1667720  & Rejected \\
         Min-D & (0.0526, 0.7667) & 1.4542517 & 0.6552592  & Rejected \\
        \hline
    \end{tabular}
\end{table}

        

Let $\mu_0, \mu_1, \mu_2,$ and $\mu_3$ be the mean of changes in the noise level of the Control, Treatment 1, Treatment 2, and Treatment 3 groups, respectively. The average change in noise level for the control group may be less than that observed in other treatments. Now we want to test $H_0: \mu_0=\mu_1=\mu_2=\mu_3$ against $H_1: \mu_0 \leq \mu_i$ (with at least one strict inequality) for all $i=1,2,3.$ The sample mean and sample variance have been calculated for each group (see Table \ref{Table 7}). Furthermore, the test statistics and critical values have been calculated for the LRT, Max-D, and Min-D tests, as shown in Table \ref{Table 8}. From this table, we may conclude that all three tests reject the null hypothesis. The critical value of the Max-D test for determining simultaneous confidence intervals for $\mu_i - \mu_0$, where $i = 1, 2, 3,$ is $2.1667720$. The resulting confidence intervals are $\mu_1 - \mu_0 \in (-0.3174323 ,\infty)$, $\mu_2 - \mu_0 \in (0.1050967 ,\infty)$, and $\mu_3 - \mu_0 \in (0.5668287 ,\infty)$. Hence, it is clear that $\mu_0 = \mu_1$, $\mu_0 < \mu_2$, and $\mu_0 < \mu_3.$ It implies that Treatment 2 and Treatment 3 are more effective than the control. 
\begin{figure}[!ht]
    \centering
    \includegraphics[width=0.7\textwidth]{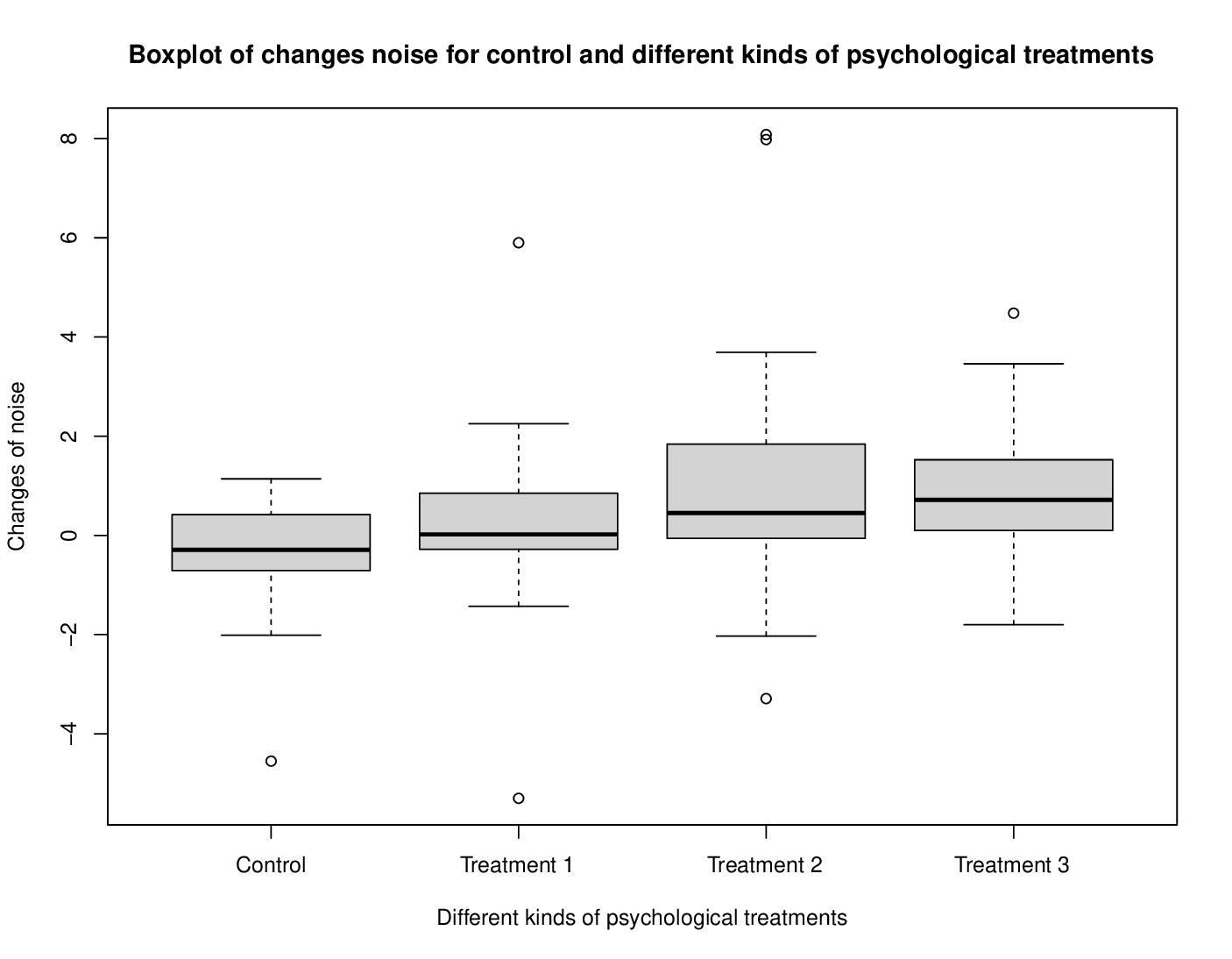}
    \caption{Boxplot of data}
    \label{fig:10}
\end{figure}
\begin{figure}[!ht]
    \centering
    \includegraphics[width=0.9\textwidth]{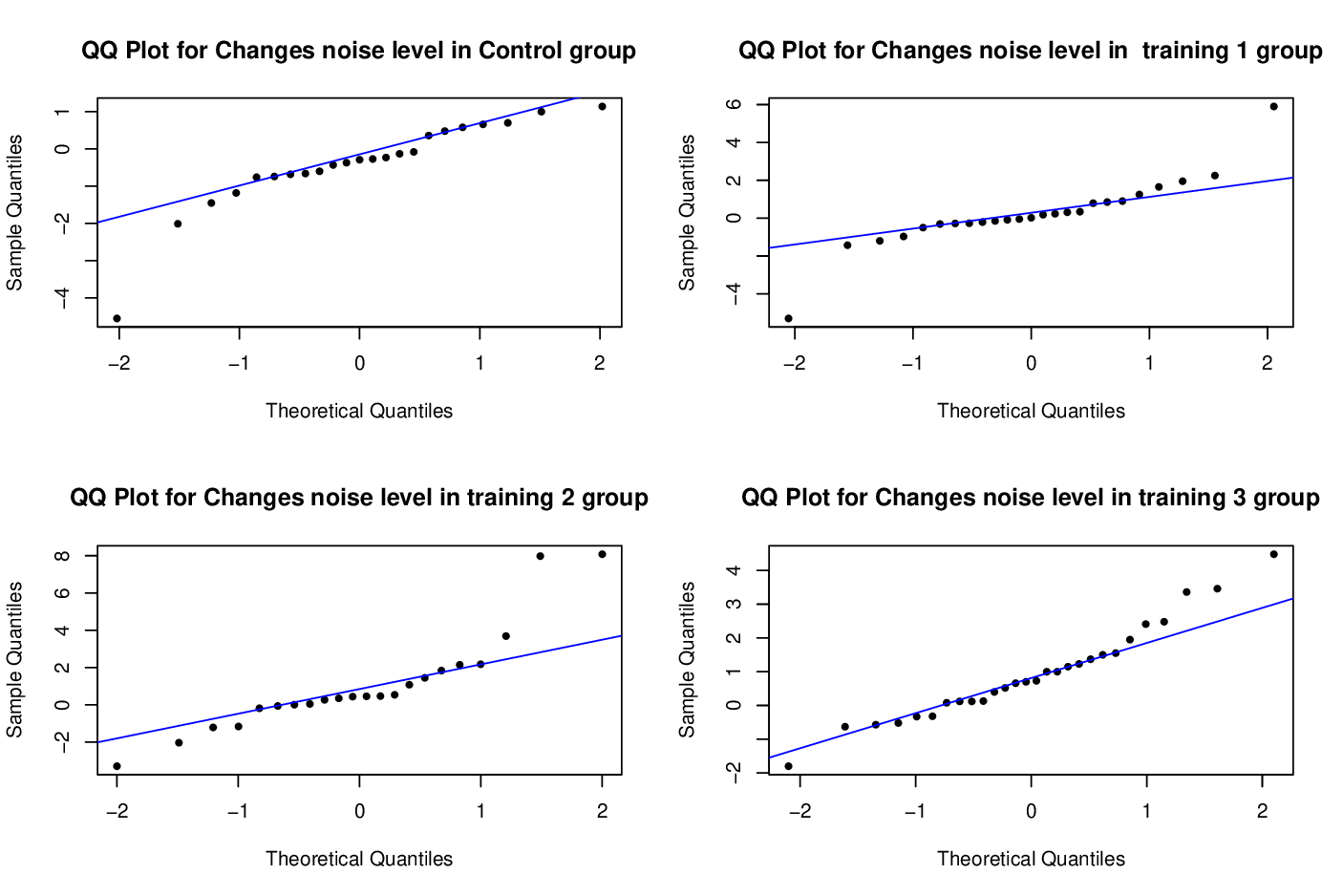}
    \caption{Q-Q plot of data}
    \label{fig:11}
\end{figure}

\section{Concluding Remarks}\label{Conclusions}

Three test procedures - Likelihood ratio test, Max-D test, and Min-D test are proposed to test the equality of mean effects against the tree ordered mean effects in a one-way analysis of variances under heteroscedasticity. To perform LRT, we have developed a numerical scheme to compute the MLEs of parameters under the tree ordered alternative. Two additional tests, referred to as Max-D and Min-D, are suggested, which are based on pairwise differences between the control mean and other treatment means. The parametric bootstrap is used to determine the critical values of tests. 
Extensive simulations are done to assess the size and power efficacy of these tests. It is observed that all three tests maintained a nominal significance threshold level close to $0.05$, with some exceptions noted for Min-D for small samples. In power comparison, we considered three different scenarios of tree ordered means; in each case, one of the tests outperformed the other two. One existing PB test for a related alternative hypothesis is also considered for comparisons and, our tests are shown to be more powerful. Additionally, we conducted the robustness study of these tests for different non-normal distributions (skew-normal, t, Laplace, mixture-normal, and exponential). All three tests preserve the nominal type I error rate for all symmetric and skew-normal distributions, with some exceptions noted for some small sample cases. The power plots show no loss of power for all tests for skew-normal, t, and Laplace distributions. A software package is developed in `R' and shared on the open platform `GitHub' for implementation on real data sets. Finally, the tests have been illustrated with an experimental study of patients undergoing some psychological treatment.

\backmatter

\section*{Declarations}

\textbf{Conflict of interest} The authors declare that they have no conflicts of interest relevant to the content of this article.

\bmhead{Supplementary information}
\section*{Acknowledgements}
The first author expresses gratitude for the financial support received from the ``Prime Minister Research Fellow (PMRF) Scheme, Government of India."

\section*{Data Availability Statement}  
The dataset analyzed in this study is publicly available at {\small \burl{https://www.kaggle.com/datasets/utkarshx27/headache-sufferers-for-sensitivity-to-noise?resource=download}}.

\begin{appendices}

\section{}\label{secA1}

We recall a few results regarding isotonic regression. Let ${\underline{\bf u}}, {\underline{\bf v}}$ be two vectors on $\mathbb{R}^{k+1}$ with the inner product and norm defined as 
\begin{equation}
    \left\langle \underline{\bf u},\underline{\bf v}\right\rangle_{\underline{\bf w}} = \sum\limits_{i=0}^{k} u_iv_iw_i \;\;\text{and}\;\; ||\underline{\bf u}||_{\underline{\bf w}} = \left\langle \underline{\bf u},\underline{\bf u}\right\rangle_{\underline{\bf w}}^{1/2},\label{A1}
\end{equation}
where $\underline{\bf u} = (u_0,u_1,\ldots u_k)^T$, $\underline{\bf v} = (v_0,v_1,\ldots v_k)^T$, and the weight function $\underline{\bf w} = (w_0,w_1,\ldots w_k)^T$ with $w_i>0$ for all $i=0,1,\ldots k.$ A vector $\underline{\bf u}^*$ is an isotonic regression of $(\underline{\bf u},\underline{\bf w})$ if and only if $\underline{\bf u}^*$ is the projection of $\underline{\bf u}$ onto $C$ under the inner product (\ref{A1}), where $C$ is the closed convex cone determined by a specified partial order. By Theorem 7.8 in  \cite{brunk1972statistical},  $\underline{\bf u}^*$ is an isotonic regression of $(\underline{\bf u},\underline{\bf w})$ if and only if $\underline{\bf u}^* \in C$ and 
\begin{equation}
    \left\langle \underline{\bf u}-\underline{\bf u}^*,\underline{\bf u}^*\right\rangle_{\underline{\bf w}}=0; \;\;\; \left\langle \underline{\bf u}-\underline{\bf u}^*,\underline{\bf z}\right\rangle_{\underline{\bf w}}\leq 0 \;\;\; \text{for any} \;\; \underline{\bf z}\in C. \label{A2}
\end{equation}

\cite{shi1998maximum} have given an algorithm for finding the MLEs of means and variances when means are under simple order restrictions. They have proved the convergence of their algorithm. However, in our case, we have tree order alternatives among means. The convergence of Algorithm \ref{Algorithm 1} can be established following some different steps. \\

{\bf Proof of the Theorem \ref{Theorem 1}:}
\begin{proof}
    From Algorithm \ref{Algorithm 1} and \cite{shi1998maximum}, the sequence $\{\left(\underline{\boldsymbol{\mu}}^{(m)}, \underline{\boldsymbol{\sigma}}^{2(m)}\right)\}$ is uniformly bounded. Therefore there exists a subsequence $\{\left(\underline{\boldsymbol{\mu}}^{(m_q)}, \underline{\boldsymbol{\sigma}}^{2(m_q)}\right)\}$ and a point $\left(\underline{\boldsymbol{\mu}}^*, \underline{\boldsymbol{\sigma}}^{2}(\underline{\boldsymbol{\mu}}^*)\right)\in I_0\times J_0$, where $I_{0}=\left\{\underline{\boldsymbol{\mu}} \in I : a \leq \mu_{i} \leq b, i=0,1, \ldots, k\right\}$ and $J_0$ is a compact subset in $\mathbb{R}^{k+1}$ such that \begin{align}
        \left(\underline{\boldsymbol{\mu}}^{(m_q)}, \underline{\boldsymbol{\sigma}}^{(m_q)}\right) \rightarrow \left(\underline{\boldsymbol{\mu}}^*, \underline{\boldsymbol{\sigma}}^{2}(\underline{\boldsymbol{\mu}}^*)\right)\; \text{as} \;m_q \rightarrow \infty .\label{A3}
    \end{align}
By Lemma A.1. of \cite{shi1998maximum},
\begin{align}
\underline{\boldsymbol{\mu}}^{(m_q-1)}-\underline{\boldsymbol{\mu}}^{(m_q)}\rightarrow \underline{\boldsymbol{0}}\;\;\text{and}\;\;\underline{\boldsymbol{\sigma}}^{2(m_q-1)}-\underline{\boldsymbol{\sigma}}^{2(m_q)}\rightarrow \underline{\boldsymbol{0}}. \label{A4}
\end{align}
Now we show that $\left(\underline{\boldsymbol{\mu}}^*, \underline{\boldsymbol{\sigma}}^{2}(\underline{\boldsymbol{\mu}}^*)\right)$ is the MLE of $\left(\underline{\boldsymbol{\mu}}, \underline{\boldsymbol{\sigma}}^2\right).$ Following \cite{brunk1972statistical} we have to show \begin{align}
\left\langle \underline{\overline{\bf{x}}}-\underline{\boldsymbol{\mu}}^*,{\underline{\boldsymbol{\mu}}}^*\right\rangle_{\underline{\bf{w}}^*}=0 \;\;\text{and}\;\; \left\langle \underline{\overline{\bf{x}}}-\underline{\boldsymbol{\mu}}^*,\underline{\boldsymbol{\mu}}\right\rangle_{\underline{\bf{w}}^*}\leq 0 \;\;\; \text{for any} \;\; \underline{\boldsymbol{\mu}} \in I_0. \label{A5}
\end{align}

Then
\begin{align*}
    \left\langle \underline{\overline{\bf{x}}}-\underline{\boldsymbol{\mu}}^*,\underline{\boldsymbol{\mu}}^*\right\rangle_{\underline{\bf{w}}^*} &= \lim_{m_q \to \infty}  \left\langle \underline{\overline{\bf{x}}}-\underline{\boldsymbol{\mu}}^{(m_q)},\underline{\boldsymbol{\mu}}^{(m_q)}\right\rangle_{\underline{\bf{w}}^{(m_q)}} \\
    &= \lim_{m_q \to \infty}  \left\langle \underline{\overline{\bf{x}}}-\underline{\boldsymbol{\mu}}^{(m_q)},\underline{\boldsymbol{\mu}}^{(m_q)}\right\rangle_{\underline{\bf{w}}^{(m_q-1)}}\;\;(\text{using (\ref{A4})}) \\
    &= 0 \;\;(\text{using\; (\ref{A2})})
\end{align*}
Now for any $\underline{\boldsymbol{\mu}} \in I_0$
\begin{align*}
    \left\langle \underline{\overline{\bf{x}}}-\underline{\boldsymbol{\mu}}^*,\underline{\boldsymbol{\mu}}\right\rangle_{\underline{\bf{w}}^*} &= \lim_{m_q \to \infty}  \left\langle \underline{\overline{\bf{x}}}-\underline{\boldsymbol{\mu}}^{(m_q)},\underline{\boldsymbol{\mu}} \right\rangle_{\underline{\bf{w}}^{(m_q)}} \\
     &= \lim_{m_q \to \infty}  \left\langle \underline{\overline{\bf{x}}}-\underline{\boldsymbol{\mu}}^{(m_q)},\underline{\boldsymbol{\mu}}\right\rangle_{\underline{\bf{w}}^{(m_q-1)}}\;\;(\text{using (\ref{A4})}) \\
    &\leq  0 \;\;(\text{using (\ref{A2})})
\end{align*}
Hence $\left(\underline{\boldsymbol{\mu}}^*, \underline{\boldsymbol{\sigma}}^{2}(\underline{\boldsymbol{\mu}}^*)\right)$ is the MLE of $\left(\underline{\boldsymbol{\mu}},\underline{\boldsymbol{\sigma}}^2\right).$ We proved that any convergence subsequence of $\{\left(\underline{\boldsymbol{\mu}}^{(m)}, \underline{\boldsymbol{\sigma}}^{2(m)}\right)\}$ converges to that MLE and hence $\{\left(\underline{\boldsymbol{\mu}}^{(m)}, \underline{\boldsymbol{\sigma}}^{2(m)}\right)\}$ converges to the MLE $\left(\underline{\boldsymbol{\mu}}^*, \underline{\boldsymbol{\sigma}}^{2}(\underline{\boldsymbol{\mu}}^*)\right)$. 
\end{proof}
\end{appendices}

\bibliographystyle{agsm}
\bibliography{ref_tree}

\end{document}